\definecolor {processblue}{cmyk}{0.96,0,0,0}
\newtheorem{mydef}{Definition}
\newtheorem{rem}{Remark}
\newtheorem{prop}{Proposition}
\newcommand*\diff{\mathop{}\!\mathrm{d}}
\journalname{Advances in Data Analysis and Classification}
\begin{document}

\title{Independence versus Indetermination: basis of two canonical clustering criteria 
}


\author{Pierre Bertrand     \and
    Michel Broniatowski   \and
    \\Jean-François Marcotorchino
}


\institute{Pierre Bertrand \at
	     Laboratoire de Probabilités, Statistique et Modélisation | CNRS UMR 8001, Sorbonne Université, Paris, France\\
       \email{pierre.bertrand@ens-paris-saclay.fr}      
      \and
      Michel Broniatowski \at
	     Laboratoire de Probabilités, Statistique et Modélisation | CNRS UMR 8001, Sorbonne Université, Paris, France\\
      \email{michel.broniatowski@sorbonne-universite.fr}
      \and
      Jean-François Marcotorchino \at
	     Institut de statistique, Sorbonne Université, Paris, France\\
	    \email{jfmmarco3@gmail.com}
}

\date{Received: date / Accepted: date}

\maketitle

\begin{abstract}
This paper aims at comparing two coupling approaches as basic layers for building clustering criteria, suited for modularizing and clustering very large networks. 

We briefly use "optimal transport theory" as a starting point, and a way as well, to derive two canonical couplings: "statistical independence" and "logical indetermination". A symmetric list of properties is provided and notably the so called "Monge’s properties", applied to contingency matrices, and justifying the $\otimes$ versus $\oplus$ notation. A study is proposed, highlighting "logical indetermination", because it is, by far, lesser known. 

Eventually we estimate the average difference between both couplings as the key explanation of their usually close results in network clustering.
\keywords{Correlation Clustering \and Mathematical Relational Analysis \and Logical Indetermination \and Coupling Functions \and Optimal Transport \and Graph Theoretical Approaches}
\end{abstract}

\section{Introduction}
\label{sec:intro}

Network clustering (or cliques partitioning of graphs) is a key topic, concerned with a very large dedicated literature. One of the reasons of this status is the recent and power use made by the GAFAM companies about very large networks resulting of modern activities dealing with: \textit{big social networks, cellphone communications networks, high speed financial trading, large IT networks, IOT networks etc..,}. This is simultaneously associated with the IT capacity afforded today to store the really huge amounts of data, those activities force us to cope with. The sudden apparition of these big networks gave rise to a renewal of the so-called \textit{graph theoretical domain}, used in that context for different purposes, such as: discovering the latent cliques, clustering the whole network, isolating some key parts of interest within the network, etc. In other words, this massive and raw information contained inside the networks must be analyzed \textit{per se}, and this leads obviously to mandatory techniques, among which networks clustering plays a prominent role, with a lot of practical contextual applications. 

In the scientific literature, it appears that many different methods have been dedicated to graphs clustering, one can find in~\cite{fortunato2010community} or more recently in~\cite{doreian2020advances} a quite interesting overview on this matter. Most of them use a local criterion based on the number of paths \cite{katz1953new}, the number of shortest paths~\cite{GN02} or a proportion of present edges~\cite{nascimento2014community} which is then aggregated to define a global criterion to optimize. Some methods are based on pure decomposition of a graph, as for instance in \cite{asano1988clustering} where they construct a $k$-clustering based on a spanning tree by removing $k-1$ edges, some other existing methods are concerned directly with spectral analysis of graph laplacians, or with mathematical relational analysis (correlation clustering). In addition to this relatively main stream list, some methods rely on the application of very specific mathematical domains as those typically addressed through Mean Field Games theory in~\cite{coron2017} or more promising, such as the approach given in~\cite{ni2019community} where a method based on the evolving of a discrete Ricci curvature flow was proposed. 

This method deserves to be briefly exposed: given two measures $\mu$ and $\nu$ in a space endorsed with a distance $d$, the Wasserstein’s transportation distance $W(\mu,\nu)$ is the minimum total weight to move $\mu$ to $\nu$ according to $d$ as presented in~\cite{ollivier2009ricci}. A measure $m^{\alpha,p}_x$ to capture the neighborhood of a node $x$ is formulated in \cite{ni2019community}. With $d$ the shortest path in the network, the Ricci's curvature then basically expresses whether $x$ is closer from $y$ using $d$ or $W$ between $m^{\alpha,p}_x$ and $m^{\alpha,p}_y$. Eventually, they are interested in the Ricci Flow (see~\cite{hamilton1982three}) which solves a differential equation where the derivative is (almost) this Ricci curvature; consequently if the neighborhood between $x$ and $y$ is closer than $d(x,y)$, the derivative is strictly less than one and conversely. Iteratively updating a weight to solve the Ricci Flow differential equation, before cutting links greater than a threshold, they deliver results quite similar and even better on the usual experimental networks. Nevertheless, the lack of a canonical choice for the underlying parameters ($\alpha$,$p$ and the threshold) appears as a limitation similar to the choice of a criterion in the usual Louvain Algorithm. This drawback precisely motivates the present paper. Actually, any comparison with competitive approaches would imply a self adaptive procedure to select the technical parameters. 

At that stage two aspects must be differentiated: on the one hand (i) the existence of generic algorithms to optimize a \textit{clustering criterion} as global objective function, or, on the other hand (ii), on the network clustering criteria themselves. 

Going back on the first point, (i) concerned with generic algorithms, it is well known that several methods were introduced to fit this purpose and notably the famous Louvain algorithm, whose origin is quite recent \cite{Louvain2008}, and which is recognized as a very good tool by the scientific community. It is originally based upon the optimization, of a global function called \textit{modularity} initially defined in \cite{GN04} and which enabled the community to compare two clustering on a common basis. In the sequel we shall quote $\mathcal{M}^{\times}$ this objective function and will show it measures \textit{deviation from statistical independence}.

To fulfill the (ii) objective, the Louvain algorithm has been naturally generalized in \cite{Louvain2013} where the authors proposed to choose a candidate criterion among a list of global criteria, different from the usual \textit{modularity}. Actually, the modularity, due to a resolution limit first mentioned in~\cite{fortunato2007resolution}, has been modified in several articles (\cite{reichardt2006statistical}, \cite{lancichinetti2011limits} or \cite{chen2015new}) always motivated by experimental results; we will not detail further the list of available criteria. In her thesis \cite{PCThese}, Patricia Conde-C{\'e}spedes, proposed some experiments on usual networks, involving $\mathcal{M}^{\times}$ plus some others, showing that results may vary from one criterion to another, while being still consistent and interpretable. 

In this paper, we will focus, on two network clustering criteria she applied, the original $\mathcal{M}^{\times}$ and a second quoted $\mathcal{M}^{+}$ which is locally based on a deviation to another coupling function, already latent in a paper of Fréchet~\cite{FRE51} and that we shall call \textit{indetermination} or \textit{logical indetermination} (notion introduced by J.F. Marcotorchino in his seminal papers~\cite{MAR84} and~\cite{MarcoGSIDual}). 

The innovation of this paper can be stated as follows:
\begin{itemize}
\item We rely on a work of Csiszar on divergences~\cite{csiszar1991least} which assesses that the costs in a projection problem is restricted to Least Square or Entropy. Leveraging on it we show that the two chosen criteria $\mathcal{M}^{\times}$ and $\mathcal{M}^{+}$ precisely result from the optimization of the two corresponding canonical discrete transportation problems.
\item We gather known and new properties of the so-called \textit{indetermination}, an equilibrium already applied in the graph clustering domain but never studied \textit{per se} in a more general context. Furthermore, the expected difference between the two canonical coupling functions is shown to be of order $\mathcal{O}(\frac{1}{n^2})$ where $n$ counts the number of nodes. 
\item We validate some of these findings, by reanalyzing more systematically the behavior of those criteria on the very simple model of Gilbert’s graphs. This last item illustrates small expected difference of the previous item and explains the close experimental results found in~\cite{Louvain2013}. Besides it motivates the search of situations where the two criteria side significantly apart as briefly considered in the paper. 
\end{itemize}

The paper is structured as follows. In section~\ref{sec:1} we propose a parallel discovery of two coupling functions ($\otimes$) and ($\oplus$) using discrete \textit{optimal transport theory}. In section~\ref{sec:Monge} is mentioned a list of dual properties related to Monge's matrices and which justify the notation $\oplus / \otimes$ that we propose. Section~\ref{sec:indetermination} deeply studies \textit{indetermination} introducing properties that, to our knowledge, deserve to be put forward with regards to the too poor coverage which is devoted to them in the literature. Finally, Section~\ref{sec:Graphs} gathers a study about the behavior of the criteria based on those coupling functions on the general Gilbert’s random network model, quoting a global similarity which illustrates their symmetric construction.

\section{Parallel discovery of two dual couplings}
\label{sec:1}
When we want to couple two marginal laws, the most common and straightforward way to proceed, consists in assuming \textit{independence} and keep on computations. For instance when we use a very classical and usual criterion like the $\chi^2$ index, we are measuring nothing but a deviation to \textit{independence} a natural coupling in for empirical experiments.

Although being the most natural, it is not, by far, the only existing available coupling method; actually, as introduced by Sklar in \cite{Sklar73}, any copula function will lead to a coupling function behaving on two cumulative distribution functions. In this document, we link a coupling function to a given optimal transport problem. Hence, to follow a similar approach for \textit{indetermination} coupling, we train ourselves first by extracting \textit{independence} coupling from the optimization of a transport problem and we generalize the principle by applying the same approach to the \textit{indetermination} case, but with a second and different transport problem. 

We already introduced the term "coupling function" several times but let us define it formally, since it will be a key notion throughout the document.

\begin{mydef}[Coupling function]
~\\
\label{def:coupling}
Given $\mu = \mu_1\ldots\mu_p$ and $\nu=\nu_1\ldots\nu_q$ two discrete probabilities called marginal distributions (or simply margins), we want to define a probability function $\pi = \pi_{u,v}~\{1\le u\le p,~1\le v\le q\}$ on the product space.
A way for building it up, consists in making happen a coupling function $C$ such that $\pi = C(\mu,\nu)$, satisfying the following constraints:
\begin{itemize}
 \item (first margin) $ C(\mu,\nu)_{u,\cdot} = \sum_{v=1}^q C(\mu,\nu)_{u,v} = \mu_u,~ \forall 1 \le u \le p $
 \item (second margin) $ C(\mu,\nu)_{\cdot,v} = \sum_{u=1}^p C(\mu,\nu)_{u,v} = \nu_v,~ \forall 1 \le v \le q $
 \item (positivity) $ C(\mu,\nu)_{u,v} \geq 0,~ \forall 1 \le u \le p,~ \forall 1 \le v \le q $
\end{itemize}
\end{mydef}

\begin{rem}
\label{rem:simplecoupling}
~\\
All coupling functions (or maps) we use will satisfy: $\pi_{u,v} = C(\mu,\nu)_{u,v} = C(\mu_u,\nu_v)$; this illustrates that $\pi$ value on $(u,v)$ only depends upon the value on the corresponding margins: $\mu_u$ and $\nu_v$. 
\end{rem}

\subsection{Some few words about Optimal Transport}
\label{ssec:digression}
Looking at Definition~\ref{def:coupling}, we observe that a coupling function behaves as a copula in the discrete domain: acting on margins it derives a probability distribution on the product space.

We can imagine a lot of coupling functions, especially if we do not limit ourselves to Remark~\ref{rem:simplecoupling}. The constraints that $C$ has to respect, lead us to cope with some difficulties. This is the reason why we shall choose a systematic approach: minimizing a cost function and observe the link to \textit{optimal transport} definition.

The ad-hoc discrete optimal transport problem we will be dealing with, typically looks like Problem~\ref{pb:discreteMKP}, given hereafter (where MKP stands for Monge-Kantorovitch-Problem).

\begin{problem}[Discrete Version of MKP]{\label{pb:discreteMKP}}
\begin{eqnarray*}
	\min_{\pi}&&{\sum_{u=1}^p\sum_{v=1}^q{\textbf{C}(\pi(u,v))}}\\
	\textnormal{subject to:} &&\nonumber\\
	&&\sum_{v=1}^{q}{\pi(u,v)} = \mu_u \textnormal{; } \forall u \in \{1,...,p\}\nonumber\\
	&&\sum_{u=1}^{p}{\pi(u,v)} = \nu_v \textnormal{; } \forall v \in \{1,...,q\}\nonumber\\
	&&\pi(u,v) \ge 0 \textnormal{; } \forall (u,v) \in \{1,...,p\}\times\{1,...,q\}\nonumber	
\end{eqnarray*}
\end{problem}

The choice of a cost function $\textbf{C}$ depends upon the applications we want to address. Typically, we expect the global assignment to be as smooth as possible, meaning close to uniform (see both examples in the sequel). A MKP problem is then essentially given by its cost function, while margins $(\mu,\nu)$ may vary. This is the reason why we shall try to solve it with a model taking the fixed margins as parameters. Let us define now an optimal coupling function $C$ associated to a given MKP problem with fixed margins given as parameters.

\begin{mydef}[MKP Problem Associated with Coupling function]
\label{def:couplingOptimal}
~\\
For a given MKP problem $P$, we can define a coupling function $C^P$ by:
$C^P(\mu,\nu) = \pi^*(P)$ provided that $\pi^*$ exists as a unique solution of $P$ with margins $\mu$ and $\nu$.
\end{mydef}

Following Definition \ref{def:couplingOptimal} we propose the solutions of two discrete optimal transport problems that we shall use in section~\ref{sec:Graphs}: each implies a structured and well-defined criterion, suitable for network clustering.

\subsection{The Alan Wilson's Entropy Model: role of "independence"}
\label{ssec:Optimx}

First introduced by Sir Alan Wilson in 1969 for "Spatial Interaction Modeling" the "Flows Entropy Model" of Alan Wilson, can be found in his various publications: originated in \cite{WIL67} and developed in \cite{WIL69}. A fundamental justification of his approach corresponds to the following contextual situation: in a theoretical system, elements of which do not maintain affinities, it is advisable to determine the distribution of $\pi(u,v)$ (normalized frequency flows), supposing $\pi\geq 0$ which maximizes the entropy of the system under certain constraints. The objective function to be minimized is based upon the \textit{Boltzmann's or Shannon's Entropies} so that the problem should be expressed as follows:

\begin{problem}[Unbalanced PSIS]
\label{pb:AWE}
\begin{equation*}
 \min_{\pi}{-\sum_{u=1}^p\sum_{v=1}^q{\pi(u,v)\ln(u,v)}}
\end{equation*}
\end{problem}

In a situation where we have a total absence of information, the minimization of Problem~\ref{pb:AWE} just amounts to satisfy the constraint that the cell values distribution is effectively a probability (i.e.: the sum of positive $\pi(u,v)$ is equal to 1). The solution of this very simple "Program of Spatial Interaction System" (PSIS) is nothing but the uniform law:

\begin{equation}
\pi^*(u,v) = \frac{1}{pq}
\end{equation}

In other words, when we ignore everything about the way the exchanges are built up, it is necessary to use Laplace's principle of "insufficient reason" and to consider that the world trade is uniformly distributed inside the system.

By using margins, let us say information about total exports (origins flows) and total imports (destination flows), degree of disorder of the system can be drastically reduced. Indeed, totals on rows and columns are no longer free, but must satisfy marginal values $\mu_u$ and $\nu_v$, fixed by the application as expressed in Problem~\ref{pb:AWEConstaints}; solution of which is given by theorem~\ref{th:AWE}.

\begin{problem}[Balanced PSIS]\label{pb:AWEConstaints}
\begin{eqnarray*}
 \min_{\pi}{-\sum_{u=1}^p\sum_{v=1}^q{\pi(u,v)\ln(\pi(u,v))}}&&\\
 \textnormal{subject to constraints of Problem~\ref{pb:discreteMKP}}&&
\end{eqnarray*}
\end{problem}

\begin{theorem}\label{th:AWE}
~\\

The solution of Problem~\ref{pb:AWEConstaints} is $\pi^{\times}(u,v) = \mu_u\nu_v $.

Hence the coupling function associated to Problem~\ref{pb:AWEConstaints} is nothing but "independence":
$$C^{Problem~\ref{pb:AWEConstaints}}(\mu,\nu)_{u,v} = C^{\times}(\mu,\nu)_{u,v} = (\mu\otimes\nu)_{u,v} = \mu_u\nu_v$$

\end{theorem}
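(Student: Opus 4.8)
The plan is to treat Problem~\ref{pb:AWEConstaints} as a constrained convex program and to solve it by Lagrange multipliers, the objective being (up to sign) the Shannon entropy and the constraints being the affine marginal constraints inherited from Problem~\ref{pb:discreteMKP}. First I would record that $x\mapsto x\ln x$ is strictly convex, so $\sum_{u,v}\pi(u,v)\ln\pi(u,v)$ is strictly convex, while the feasible set cut out by the two families of marginal equalities and the positivity constraints is a nonempty (it contains $\mu\otimes\nu$) compact convex polytope. Consequently the optimum exists, is unique, and the Karush--Kuhn--Tucker conditions are both necessary and sufficient. A useful preliminary remark is that the entropy gradient $-\ln\pi(u,v)-1$ diverges to $+\infty$ as $\pi(u,v)\to 0^{+}$, so the positivity constraints are never active at the optimum and we may look for an interior stationary point.

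Next I would form the Lagrangian $L=-\sum_{u,v}\pi(u,v)\ln\pi(u,v)-\sum_u\alpha_u\big(\sum_v\pi(u,v)-\mu_u\big)-\sum_v\beta_v\big(\sum_u\pi(u,v)-\nu_v\big)$ and impose stationarity. Differentiating with respect to a single entry $\pi(u,v)$ yields $-\ln\pi(u,v)-1-\alpha_u-\beta_v=0$, hence $\pi(u,v)=\exp(-1-\alpha_u-\beta_v)$, which factorizes as $\pi(u,v)=a_u\,b_v$ with $a_u=e^{-1-\alpha_u}>0$ and $b_v=e^{-\beta_v}>0$. The crucial structural observation is therefore that any stationary point is a strictly positive rank-one matrix; everything after this is bookkeeping to pin down $a_u$ and $b_v$ from the margins.

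Finally I would substitute this product form into the marginal constraints. Writing $A=\sum_u a_u$ and $B=\sum_v b_v$, the first-margin constraint $\sum_v a_u b_v=\mu_u$ gives $a_u=\mu_u/B$, and the second-margin constraint $\sum_u a_u b_v=\nu_v$ gives $b_v=\nu_v/A$. Summing either relation and using $\sum_u\mu_u=\sum_v\nu_v=1$ forces $AB=1$, so that $\pi(u,v)=a_u b_v=\mu_u\nu_v/(AB)=\mu_u\nu_v$. By the uniqueness coming from strict convexity, this $\pi^{\times}=\mu\otimes\nu$ is the global optimum, and by Definition~\ref{def:couplingOptimal} the associated coupling function is exactly independence.

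The only genuinely delicate point, as opposed to the routine Lagrange computation, is the justification that the interior stationary point is the \emph{unique global} optimizer; this is where strict convexity of $x\ln x$ and the inactivity of the positivity constraints do the real work. I would also double-check the sign convention of the objective, since a stray minus sign decides whether we are maximizing entropy, as the Wilson model intends, rather than minimizing it. As an independent cross-check I would give the information-theoretic one-liner: for any coupling with margins $\mu,\nu$ the mutual information $\sum_{u,v}\pi(u,v)\ln\frac{\pi(u,v)}{\mu_u\nu_v}\ge 0$ is a Kullback--Leibler divergence, vanishing if and only if $\pi=\mu\otimes\nu$; since it equals $H(\mu)+H(\nu)-H(\pi)$, the entropy $H(\pi)$ is maximized precisely at independence, recovering the theorem without Lagrange multipliers and simultaneously confirming uniqueness.
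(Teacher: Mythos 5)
Your proof is correct, and it is in substance the proof the paper intends but omits: the paper skips the argument for Theorem~\ref{th:AWE} on the grounds that it parallels the Lagrangian proof given for Theorem~\ref{th:MTM}, and your development (stationarity of the Lagrangian forcing the rank-one form $\pi(u,v)=a_u b_v$, the marginal constraints pinning down $a_u=\mu_u/B$, $b_v=\nu_v/A$ with $AB=1$, and strict convexity plus the blow-up of the entropy gradient at the boundary guaranteeing a unique interior optimum) is exactly that parallel argument, carried out with more care than the paper's template proof, which never addresses activity of the positivity constraints. Your caution about the sign convention is well placed: Problem~\ref{pb:AWEConstaints} as printed asks to \emph{minimize} $-\sum_{u,v}\pi(u,v)\ln\pi(u,v)$, i.e.\ to minimize the entropy, whereas the surrounding text and the claimed solution $\mu_u\nu_v$ require entropy \emph{maximization}, so you correctly solve the intended problem rather than the misprinted one. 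The closing mutual-information argument, namely that $\sum_{u,v}\pi(u,v)\ln\bigl(\pi(u,v)/(\mu_u\nu_v)\bigr)\ge 0$ with equality iff $\pi=\mu\otimes\nu$, is a genuinely different and shorter route not present in the paper; it delivers global optimality and uniqueness in one stroke without multipliers or KKT conditions, at the price of invoking the nonnegativity of the Kullback--Leibler divergence rather than deriving the solution constructively.
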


We skip the proof of theorem~\ref{th:AWE} as it is similar to the one we will develop for theorem~\ref{th:MTM} which is less common. 

As a conclusion, from the direct maximization of entropy, we get the solution expressed in terms of probability and remark that the associated coupling function is nothing but "independence" (expressed with a $\otimes$ throughout the document).

\subsection{The minimal trade model: role of "indetermination"}
\label{ssec:Optim+}

In the "Minimal Trade Model" (see \cite{STE77}, \cite{MAR84} and \cite{MarcoGSIDual}), the cost function aims at getting a smooth breakdown of the origins-destinations $\pi(u,v)= \frac{n_{u,v}}{n_{\cdot,\cdot}}$ which explains the term "Minimal Trade". In that case the criterion is a quadratic function measuring squared deviation of the cells values from the "no information" situation (the uniform joint distribution law related to Problem~\ref{pb:AWE}). Obviously, in case of free margins, the solution remains the uniform law. However, adding usual pre-conditioned constraints on margins, the least squared problem is Problem~\ref{pb:MTM}; solution of which is given by theorem~\ref{th:MTM}.

\begin{problem}[Minimal Trade Model]\label{pb:MTM}
\begin{eqnarray*}
 \min_{\pi}{\sum_{u,v}\left\{\pi(u,v)-\frac{1}{p q}\right\}^2}&&\\
 \textnormal{subject to constraints of Problem~\ref{pb:discreteMKP}}&&
\end{eqnarray*}
\end{problem}

\begin{theorem}\label{th:MTM}
~\\

The solution of Problem~\ref{pb:MTM} is $\pi^{+}(u,v) = \frac{\mu_u}{q} + \frac{\nu_v}{p}-\frac{1}{pq}$.

Hence the coupling function associated to Problem~\ref{pb:MTM} is nothing but "indetermination":
$$C^{Problem~\ref{pb:MTM}}(\mu,\nu)_{u,v} = C^{+}(\mu,\nu)_{u,v} = (\mu\oplus\nu)_{u,v} = \frac{\mu_u}{q} + \frac{\nu_v}{p}-\frac{1}{pq}$$
\end{theorem}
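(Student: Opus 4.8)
The plan is to treat Problem~\ref{pb:MTM} as a strictly convex quadratic program over the affine set cut out by the two margin constraints, temporarily setting the positivity requirement aside. First I would form the Lagrangian for the two families of equality constraints, with multipliers $\lambda_u$ for the row margins and $\gamma_v$ for the column margins. Differentiating the objective with respect to a single cell and setting the derivative to zero yields the stationarity condition $2\left(\pi(u,v)-\frac{1}{pq}\right)=\lambda_u+\gamma_v$. The structural consequence is that any stationary point must be \emph{additive} in its indices, $\pi(u,v)=a_u+b_v$, which is exactly the shape of the announced solution $\pi^{+}(u,v)=\frac{\mu_u}{q}+\frac{\nu_v}{p}-\frac{1}{pq}$.

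Next I would confirm feasibility of the candidate directly. Summing $\pi^{+}(u,v)$ over $v$ gives $q\cdot\frac{\mu_u}{q}+\frac{1}{p}\sum_v\nu_v-\frac{q}{pq}=\mu_u$ since $\sum_v\nu_v=1$, recovering the row margins; the symmetric computation over $u$, using $\sum_u\mu_u=1$, recovers the column margins. Hence $\pi^{+}$ lies in the affine feasible set.

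To upgrade feasibility to global optimality I would exploit convexity through a direct expansion rather than wrestling with the multipliers. Write an arbitrary feasible point as $\pi=\pi^{+}+h$; feasibility of both $\pi$ and $\pi^{+}$ forces $\sum_v h(u,v)=0$ for every $u$ and $\sum_u h(u,v)=0$ for every $v$. Expanding $\sum_{u,v}\left(\pi^{+}(u,v)-\frac{1}{pq}+h(u,v)\right)^2$ produces three terms, and the cross term $2\sum_{u,v}\left(\pi^{+}(u,v)-\frac{1}{pq}\right)h(u,v)$ vanishes: substituting $\pi^{+}(u,v)-\frac{1}{pq}=\frac{\mu_u}{q}+\frac{\nu_v}{p}-\frac{2}{pq}$ and splitting the sum, each piece collapses because $h$ has vanishing row sums, column sums, and total sum. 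Thus the objective at $\pi$ equals the objective at $\pi^{+}$ plus $\sum_{u,v}h(u,v)^2\ge 0$, with equality iff $h\equiv 0$. This simultaneously establishes that $\pi^{+}$ is the minimizer and that it is unique, as required by Definition~\ref{def:couplingOptimal}.

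The main obstacle I anticipate is not the optimality argument but the bookkeeping around the two omitted ingredients. The multiplier system is rank-deficient: because both margin families sum to $1$, the equality constraints of Problem~\ref{pb:discreteMKP} are linearly dependent, so $\lambda$ and $\gamma$ are pinned down only up to a common shift; this is precisely why I favor the direct expansion, which never solves for them. The remaining point is the positivity constraint $\pi\ge 0$, which the expansion ignored: one checks that $\pi^{+}\ge 0$ holds exactly when $p\mu_u+q\nu_v\ge 1$ for all $u,v$ (a mild ``pre-conditioning'' of the margins, as noted in the text). When this holds, $\pi^{+}$ already lies in the smaller feasible set cut out by positivity, so minimizing over the larger affine set \emph{a fortiori} minimizes over it, and the proof is complete.
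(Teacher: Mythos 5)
Your proof is correct, but its core argument takes a genuinely different route from the paper's. The paper works entirely inside the Lagrangian framework: it writes stationarity as $2\pi(u,v)=\lambda_u+\omega_v+\theta$, fixes the gauge freedom by normalizing $\sum_v\omega_v=0$, then solves the linear system for the multipliers (obtaining $\lambda_u+\theta=\frac{2\mu_u}{q}$ and $\omega_v=\frac{2\nu_v}{p}-\frac{2}{pq}$) and substitutes back to \emph{derive} the closed form of $\pi^{+}$; positivity is then secured exactly as you do, by invoking Condition~\ref{cond:H} (your pointwise requirement $p\mu_u+q\nu_v\ge 1$ for all $u,v$ is equivalent to the paper's $p\min_u\mu_u+q\min_v\nu_v\ge 1$). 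You instead use the Lagrangian only as a heuristic to guess the additive shape, and prove optimality by the perturbation expansion $\pi=\pi^{+}+h$ with $h$ having vanishing row and column sums, so the cross term dies and the objective exceeds that at $\pi^{+}$ by exactly $\sum_{u,v}h(u,v)^2$. Your route buys two things the paper obtains less explicitly: it sidesteps the rank-deficient multiplier system entirely (the paper must impose the normalization precisely because the margin constraints are linearly dependent, a point you correctly diagnose), and it delivers uniqueness for free --- which Definition~\ref{def:couplingOptimal} actually requires, and which the paper covers only by a passing appeal to convexity. What the paper's route buys is that it derives the formula rather than verifying a stated candidate, which would matter if the answer were unknown; since the theorem supplies the formula, your verification-style argument is fully adequate, and your handling of the ignored positivity constraint (the unconstrained minimizer happens to lie in the positivity-constrained set, hence minimizes there \emph{a fortiori}) matches the paper's logic.
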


A supplementary condition, which is exogenous with regard to the previous model, can be added on the margins (which are, by the way, constant values given \textit{a priori}), this condition (see \cite{MAR84}) is a simple inequality which guarantees the positivity of the frequency Matrix $\pi^*(u,v)$ we are looking for:

\begin{equation}\label{cond:H}
p \min_{u}{\mu_u} + q \min_v{\nu_v} \ge 1 
\end{equation}

From now on, we shall consider that Condition~\ref{cond:H} applies whatever the breakdown of the $\mu_u$ and $\nu_v$ is. Notice that in the "Adjustment to Fixed Margins for Contingency Table" case, the associated values $n_{uv}$ must be integers, and therefore returns the problem much more complex to solve, relaxation of this integrity constraint leads formally to the Problem~\ref{pb:MTM}.

\begin{rem}[Vanishing bias] 
~\\
By developing the cost function, we obtain an interesting equality we will reuse later on:
\begin{equation}\label{devCarre}
\sum_{u,v}\left(\pi(u,v)-\frac{1}{p q}\right)^2 = \sum_{u,v}\pi^2(u,v) -\frac{1}{p q}
\end{equation}
so that the influence of the constant shift $\frac{1}{pq}$ in the squared model disappears.
\end{rem} 

\begin{proof}
~\\
The proof we propose comes directly from \cite{STE77} and \cite{MarcoGSIDual}. A generalization of the canonic additive form when we relax hypothesis~\ref{cond:H} can be found in the thesis to come~\cite{PBThese}.

Using equality~\ref{devCarre}, the Lagrangian function associated to the previous minimization model can be turned into

\begin{eqnarray*}
\mathbf{L}(\pi,\lambda,\omega,\theta) &=& \sum_{u=1}^p\sum_{v=1}^q{\pi^2(u,v)} - \sum_{u=1}^p \lambda_u\left(\mu_u - \sum_{v=1}^q{\pi(u,v)}\right)\\
&-& \sum_{v=1}^q \omega_v\left(\nu_v - \sum_{u=1}^p{\pi(u,v)}\right) - \theta\left(\sum_{u=1}^p\sum_{v=1}^q{\pi(u,v)}-1\right)
\end{eqnarray*}

Since the function to optimize is a convex one, the solution we are looking for is a minimum so that first order conditions apply and we have the following system of equations.

\begin{eqnarray}
\frac{\partial \mathbf{L}(\pi,\lambda,\omega,\theta) }{\partial\pi(u,v)} &=& 2\pi(u,v)-\lambda_u -\omega_v -\theta = 0\label{MTM1}\\
\frac{\partial \mathbf{L}(\pi,\lambda,\omega,\theta) }{\partial\lambda_u} &=& \mu_u - \sum_{v=1}^q{\pi(u,v)} = 0\label{MTM2}\\
\frac{\partial \mathbf{L}(\pi,\lambda,\omega,\theta) }{\partial\omega_v} &=& \nu_v - \sum_{u=1}^p{\pi(u,v)} = 0\label{MTM3}
\end{eqnarray}

When supposing $\sum_v\omega_v = 0$ as Lagrange multipliers are defined within a constant near we sum \ref{MTM1} on v to obtain $2\mu_u =^{\ref{MTM2}} 2\sum_v \pi(u,v) = q\lambda_u+q\theta$ so that
\begin{equation}
\lambda_u +\theta = \frac{2\mu_u}{q}, \forall u \label{MTM31}
\end{equation} 

From \ref{MTM3} we get $2 \nu_v = \sum_{u=1}^p{2\pi(u,v)} =^{\ref{MTM1}} \sum_{u=1}^p{\lambda_u+\omega_v+\theta} =^{\ref{MTM31}} \sum_{u=1}^p{\frac{2}{q}\mu_u+\omega_v} = \frac{2}{q}\mu_u+p\omega_v$ so that 
\begin{equation}\label{MTM32}
\omega_v = \frac{2\nu_v}{p}-\frac{2}{pq}, \forall v
\end{equation}

Replacing into \ref{MTM1} $\lambda_u+\theta$ and $\omega_v$ by their value given respectively by \ref{MTM31} and \ref{MTM32} we obtain:
\begin{equation*}
\pi^*(u,v) = \frac{\mu_u}{q} + \frac{\nu_v}{p}-\frac{1}{pq}, \forall (u,v)
\end{equation*}

Remark, since Condition~\ref{cond:H} applies, the $\pi^*$ expressed in the previous equation are nonnegative. We will go back to this expression, in the next sections and develop a deeper focus on it, explaining the true meaning of the term "indetermination" and some other consequences.
\end{proof}

\subsection{Expected difference between coupling}
\label{ssec:Ediff}

Both coupling functions are extracted from an optimal transport problem concentrating values around the uniform. Hence differences between them should be small in a certain sense. We provide in this section a measure of their proximity. We evaluate the expected value of a norm between the two couplings under uniform laws. More precisely we suppose the two margins $\mu$ and $\nu$ follow the Dirichlet's law (basically the uniformity on probability distributions). We remind here the form of that law for our application.

\begin{mydef}[Dirichlet's Law]
~\\
The density of a Dirichlet law $\mathcal{D}_p$ representing a uniform law among probability law on $p$ elements is expressed as follows:
$$f(\mu_1,...,\mu_p)\prod_{k=1}^p{\diff \mu_k} = \frac{1}{B(p)} \prod_{k=1}^p\mu_k^{0}\prod_{k=1}^p{\diff \mu_k}= \frac{1}{B(p)} \prod_{k=1}^p{\diff \mu_k}$$ 
where $B$ is the multinomial Beta function.
\end{mydef}

Having expressed a density function for $\mu$ and $\nu$ (replace $p$ by $q$), we apply them two coupling functions $C^{+}$ and $C^{\times}$. As a distance, we define:
\begin{equation*}
	\Delta_p = \mathbb{E}_{(\mu,\nu) \sim \mathcal{D}_p\otimes\mathcal{D}_q} \left[\sum_{u=1}^p\sum_{v=1}^q \left[(\mu\otimes\nu)_{u,v} - (\mu\oplus\nu)_{u,v}\right]^2\right]
\end{equation*}
and compute its value through the sequence:
\begin{eqnarray*}
	\Delta_p &=& \mathbb{E}_{(\mu,\nu) \sim \mathcal{D}_p\otimes\mathcal{D}_q} \left[\sum_{u=1}^p\sum_{v=1}^q \left[(\mu_u-\frac{1}{p})(\nu_v-\frac{1}{q})\right]^2\right]\\
		&=& \mathbb{E}_{\mu \sim \mathcal{D}_p} \left[\sum_{u=1}^p(\mu_u-\frac{1}{p})^2\right] \mathbb{E}_{\nu \sim \mathcal{D}_q} \left[\sum_{v=1}^q(\nu_v-\frac{1}{q})^2\right]\\
		&=& pq\mathbb{E}_{\mu \sim \mathcal{D}_p} \left[(\mu_1-\frac{1}{p})^2\right] \mathbb{E}_{\nu \sim \mathcal{D}_q} \left[(\nu_1-\frac{1}{q})^2\right]
\end{eqnarray*}

Now, we notice that we need to compute the variance of $\mathcal{D}_p$; as it is a known law, we use the following property:

\begin{prop}[Variance of Dirichlet law]
~\\
\label{prop:VarDir}
$\mathbb{V}_{X\sim \mathcal{D}_p}[X] = \frac{p-1}{p^2(p+1)}$
\end{prop}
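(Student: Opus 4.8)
The plan is to exploit the full symmetry of the uniform Dirichlet law and reduce the problem to computing the first two raw moments of a single component. Since the density is constant and invariant under any permutation of the coordinates $\mu_1,\dots,\mu_p$, every marginal $\mu_u$ has the same distribution, so it suffices to evaluate $\mathbb{E}[\mu_1]$ and $\mathbb{E}[\mu_1^2]$ and then form $\mathbb{V}[\mu_1] = \mathbb{E}[\mu_1^2] - (\mathbb{E}[\mu_1])^2$.

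To obtain these moments I would integrate $\mu_1^k$ against the constant density $1/B(p)$ over the simplex, using the Dirichlet (Liouville) integral identity $\int \prod_{i} \mu_i^{\beta_i-1}\prod_k \diff\mu_k = \prod_i \Gamma(\beta_i)\,/\,\Gamma\!\left(\sum_i \beta_i\right)$. Choosing $\beta_1 = k+1$ and $\beta_i = 1$ for $i \ge 2$ collapses the product to $\Gamma(k+1)/\Gamma(k+p)$, while the same identity with all $\beta_i = 1$ identifies the normalizing constant as $B(p) = 1/\Gamma(p)$. Hence $\mathbb{E}[\mu_1^k] = \Gamma(p)\,\Gamma(k+1)/\Gamma(k+p)$, which specializes, using $\Gamma(p+1) = p\,\Gamma(p)$ and $\Gamma(p+2) = p(p+1)\,\Gamma(p)$, to $\mathbb{E}[\mu_1] = 1/p$ for $k=1$ and $\mathbb{E}[\mu_1^2] = 2/(p(p+1))$ for $k=2$.

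Subtracting then gives $\mathbb{V}[\mu_1] = \frac{2}{p(p+1)} - \frac{1}{p^2} = \frac{p-1}{p^2(p+1)}$, which is the claimed value. There is no genuine obstacle here; the only care needed is the bookkeeping of the Gamma-function arguments and reading the formal product $\prod_k \diff\mu_k$ as the $(p-1)$-dimensional measure carried by the simplex, so that the Dirichlet identity applies verbatim. An equivalent and even shorter route would be to observe that the marginal law of $\mu_1$ is $\mathrm{Beta}(1,p-1)$ and to quote its known variance $\frac{1\cdot(p-1)}{p^2(p+1)}$ directly.
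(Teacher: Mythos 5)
Your proof is correct: the symmetry reduction plus the Liouville integral gives $\mathbb{E}[\mu_1]=1/p$ and $\mathbb{E}[\mu_1^2]=2/(p(p+1))$, whence $\mathbb{V}[\mu_1]=\frac{2}{p(p+1)}-\frac{1}{p^2}=\frac{p-1}{p^2(p+1)}$ as claimed, and your reading of $\mathbb{V}_{X\sim\mathcal{D}_p}[X]$ as the variance of a single coordinate matches how the paper uses it (to evaluate $\mathbb{E}[(\mu_1-\frac{1}{p})^2]$ in the $\Delta_p$ computation). The paper itself gives no proof --- it invokes the result as a known property of the Dirichlet law, which is exactly the $\mathrm{Beta}(1,p-1)$ marginal shortcut you mention at the end --- so your moment computation simply supplies the standard derivation the paper leaves implicit.
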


Proposition~\ref{prop:VarDir} in particular, implies that margins will concentrate their values around $\frac{1}{p}$ and $\frac{1}{q}$ respectively as soon as $p$ or $q$ increases respectively. As we notice that couplings equal each other when any margin is uniform, this should imply that $\Delta_p$ converges to $0$ if any of the two increases. This is exactly what happens, we have the expression:

$$\Delta_p = \frac{1}{pq}\left(\frac{p-1}{p+1}\cdot\frac{q-1}{q+1}\right)\le\frac{1}{pq}$$

\subsection{Structural Justification based upon an axiomatic result of Imre Csiszar}
\label{ssec:justif}

Although it seems arbitrary, our restriction to these two previous coupling functions, is all but a fortuitous decision: in~\cite{csiszar1991least}, Csiszar actually shows that, provided we verify additional intuitive properties, we must restrict ourselves to use either least square or maximum entropy as canonic "distances" between probability distributions.

Let us rewrite our transport problems in terms of the notations he uses in~\cite{csiszar1991least}. We notice that problems~\ref{pb:AWEConstaints} and \ref{pb:MTM} aims at reducing a distance from $\pi$ to the uniform law (that term actually vanishes in both), where $\pi$ must satisfy constraints on its margins leading to an eligible space $L_{\mu,\nu}$ inside the simplex $S_D$, $D>0$. In the first problem, the distance function is the entropy while in the second it is the norm $\mathbb{L}_2$.

A general question is how to adapt a "prior guess" $u^0$ to verify a list of constraints. Let us say $u^0$ lives in $S_D$ while the given constraints define a subspace $L\in\mathcal{L}$ ($\mathcal{L}$ is the space of subspaces of $S_D$ tuned by a finite list of affine constraints, see~\cite{csiszar1991least} for more details).
To formalize it, Csiszar defines a \textit{projection rule} $\Pi$ as a function whose input is a set $L\in\mathcal{L}$ and which generates a method $\Pi_L$ to project any prior guess $u^0$ to a vector in $L$:
\begin{eqnarray*}
\Pi:&& \mathcal{L} \rightarrow \left(S_D \rightarrow S_D\right)\\
&& L \rightarrow \Pi_L : \left(u^0 \rightarrow \Pi_L(u^0) \in L\right) 
\end{eqnarray*}

The article then introduces a collection of "natural" properties that we gather hereafter. 
\begin{itemize}
\item \textit{consistency}: if $L'\subset L$ and $\Pi_L(S_D)\subset L'$ then $\Pi_{L'}=\Pi_L$; basically, if the result of a projection to a bigger space is always inside a smaller, then the projection on the two spaces are equivalent.
\item \textit{distinctness}: if $L$ and $L'$ are defined by a unique constraint and they are not equal, then $\Pi_L\neq \Pi_{L'}$ (unless they both contains the initial prior guess). Typically, in $\mathbb{R}^2$, minimizing $||\cdot||$ on two lines returns a different result as soon as they do not both contain $0$.
\item \textit{continuity}: $\Pi$ is continuous with regards to $L\in\mathcal{L}$; it has a continuous relation with constraints.
\item \textit{scale invariant}: $\Pi_{\lambda L}(\lambda u) = \lambda u$ for any positive $\lambda$ and any $u\in S_D$.
\item \textit{local}: for any subset $J\subset \{1,\ldots,D\}$, $(\Pi_{L})_J = (\Pi_{L'})_{J}$ as soon as $L_J=L'_J$ where $L_J$ means we only keep constraints dealing with coordinates in $J$ and $(\Pi_L)_J$ is the restriction of the resulting vector of $\Pi_L$ to the $J$ coordinates. This property indicates that the results of $\Pi$ on a set of coordinates, only depends on constraints applied to those coordinates. 
\item \textit{transitive}: for any $L'\subset L$, $\Pi_{L'} = \Pi_L'\circ\Pi_{L}$. We can first project on a bigger space without affecting the result.
\end{itemize}

The main result of the paper~\cite{csiszar1991least} states as follows:

\begin{theorem}[Two canonic projections]
~\\
Only two projection rules respect all the conditions quoted beforehand:
\begin{eqnarray*}
\Pi^2 &:& L \rightarrow \Pi^2_L : \left(u^0 \rightarrow \mbox{argmin}_{v\in L} ||v-u^0||_2 \right)\\
\Pi^{KL} &:& L \rightarrow \Pi^{KL}_L : \left(u^0 \rightarrow \mbox{argmin}_{v\in L} \sum_{d=1}^D v_d\ln\left(\frac{v_d}{u^0_d}\right)\right) 
\end{eqnarray*}
where $\Pi^{KL}$ amounts to project using the "Kullback-Leibler" divergence.

\end{theorem}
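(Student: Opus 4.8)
The plan is to show that any projection rule $\Pi$ satisfying the six axioms must be generated by minimizing an additively separable objective, and then to pin that objective down by a functional equation whose only admissible solutions are the ones producing $\Pi^2$ and $\Pi^{KL}$.

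First I would reduce the abstract rule to a variational one. The \emph{locality} axiom is the decisive structural tool here: if the restriction of $\Pi_L(u^0)$ to a coordinate block $J$ depends only on the constraints touching $J$, then the optimal value is characterized blockwise, which forces $\Pi_L(u^0)$ to be the $\mathrm{argmin}$ over $v\in L$ of a cost of the form $F(v,u^0)=\sum_{d=1}^D f(v_d,u^0_d)$ for a single fixed integrand $f$, the same in every coordinate by the symmetry and gluing forced by \emph{consistency}. Establishing the existence of such a separable potential from purely qualitative axioms is, I expect, the main obstacle: one must rule out genuinely non-variational recipes for $\Pi$, and this is precisely where \emph{continuity} and \emph{consistency} are needed to assemble the local information into one global convex program.

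Second, with the variational form in hand, I would analyze the projection onto a single affine constraint $\sum_d a_d v_d = b$. Lagrangian stationarity gives the first-order condition $\partial_v f(v_d,u^0_d)=\lambda\,a_d$ for every $d$ with a common multiplier $\lambda$. Writing $g:=\partial_v f$, this says that $g(v_d,u^0_d)$ is proportional to the constraint coefficient $a_d$; varying the constraint and invoking \emph{transitivity}, which yields the generalized Pythagorean relation governing the composition $\Pi_{L'}=\Pi_{L'}\circ\Pi_L$ for nested $L'\subset L$, one derives a Pexider/Cauchy-type functional equation relating the values of $g$ at different arguments.

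Third, I would close the equation using the two remaining axioms. \emph{Scale invariance} requires the stationarity condition to be preserved under simultaneous positive scaling of $v$ and $u^0$, which imposes a homogeneity/ratio condition on $g$ and selects among the candidate solutions, while \emph{distinctness} excludes the degenerate (constant) branches that would make $\Pi_L$ independent of $L$. Solving the functional equation under these restrictions leaves exactly two integrands up to affine normalization: $f(v,u)=(v-u)^2$, whose condition $2(v_d-u^0_d)=\lambda a_d$ reproduces $\Pi^2$, and $f(v,u)=v\ln(v/u)-v+u$, whose condition $\ln(v_d/u^0_d)=\lambda a_d$ reproduces $\Pi^{KL}$. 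Finally I would check the converse, namely that both rules genuinely satisfy all six axioms, the $L_2$ case being elementary and the $KL$ case following from the Pythagorean identity for the $I$-divergence, so that the characterization is exactly these two rules and nothing else.
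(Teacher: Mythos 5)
The first thing to say is that the paper itself contains no proof of this statement: it is imported verbatim as the main result of Csisz\'ar's axiomatic paper \cite{csiszar1991least}, and the authors explicitly defer to that reference. So there is no internal proof to compare you against; the right benchmark is Csisz\'ar's own argument. Measured against that, your roadmap is faithful in its architecture: he too first shows that consistency and locality (together with the regularity axioms) force the rule to be \emph{generated by a function}, i.e.\ to be the argmin over $L$ of a separable objective $\sum_{d} f_d(v_d)$ depending on the prior; he then extracts functional equations from Lagrangian stationarity as the single affine constraint varies; and transitivity, via a generalized Pythagorean identity, together with scale invariance pins the integrand down, with distinctness eliminating the degenerate branches. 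Your converse check is also correctly apportioned: the $KL$ case does rest on the Pythagorean identity for the $I$-divergence, and the $L_2$ case on the composition of orthogonal projections onto nested affine subspaces.

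That said, what you have written is a program rather than a proof, and the two pillars it rests on are exactly the ones left unbuilt. First, the reduction from the qualitative axioms to the separable variational form is the technical heart of Csisz\'ar's (long) paper; you flag it as ``the main obstacle'' and then assume it. Moreover, your version of it overclaims: locality plus consistency yield coordinate-dependent integrands $f_d(v_d,u^0_d)$, not a single $f$ common to all coordinates --- there is no symmetry axiom in the list, and the uniformization across coordinates has to come out of the later functional equations, not out of consistency ``gluing'' as you assert. Second, ``solving the functional equation\ldots leaves exactly two integrands up to affine normalization'' is itself a substantial step: one must solve a Pexider/Cauchy-type system under mere continuity (no differentiability is granted a priori, so the Lagrangian stationarity you invoke needs justification), and the dichotomy between $(v-u)^2$ and $v\ln(v/u)-v+u$ is tied to the domain --- it is the positivity of the coordinates on $S_D$, activated by scale invariance, that lets the entropy branch survive and separates the two families. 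As a reconstruction of the strategy behind \cite{csiszar1991least} your proposal is on target; as a self-contained proof it has genuine gaps at precisely the two places where the real work lies.
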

To come back to our transport problem, the "prior guess" is the uniform law while the subspace $L \subset S_D$ is defined using the margin constraints forced by $\mu$ and $\nu$. Then, provided we verify quoted properties, the two cost functions we used cover an exhaustive view. Eventually it justifies the two graph clustering criteria comparing the neighborhood to each equilibrium (Definition~\ref{def:mx} and Equation~\ref{def:m+}) are canonical.

\section{Monge properties: a justification of the $\oplus/\otimes$ notation}
\label{sec:Monge}

We introduce two classes of matrices, the first one is attributed to Gaspard Monge, from a basic idea appearing in his 1781 paper, (incidentally see\cite{BKR96}, where a reference is given to Alan Hoffman\footnote{In 1961 Alan Hoffman (IBM Fellow and US Science Academy member) rediscovered Monges's observation see~\cite{Hoff63}. Hoffman showed that the Hitchcock–Kantorovich transportation problem can be solved by a very simple approach if its underlying cost matrix satisfies those Monge's properties} who first coined that point and consequently proposed the name: Monge's Matrices). For each of those Monge's matrices, we point out some remarkable equalities and, moreover, we link them to a corresponding coupling function of section~\ref{sec:1}.

\subsection{Monge property -- "Indetermination"}

To introduce Monge’s properties, we follow the exhaustive work of Rainer Burkard, Bettina Klinz and Rüdiger Rudolf exposed in the 66-pages-long article~\cite{BKR96} and begin with definition~\ref{def:Monge}.

\begin{mydef}[Monge and Anti-Monge matrix]\label{def:Monge}
~\\
A $p\times q$ real matrix $c_{u,v}$ is said to be a Monge matrix if it satisfies:
\begin{equation*}
	c_{u,v} + c_{u',v'} \le c_{u',v}+c_{u,v'} ~ \forall~ 1 \le u \le u' \le p, ~1 \le v \le v' \le q
\end{equation*}
and an Anti-Monge matrix if:
\begin{equation*}
	c_{u,v} + c_{u',v'} \geq c_{u',v}+c_{u,v'} ~ \forall~ 1 \le u \le u' \le p, ~1 \le v \le v' \le q
\end{equation*}
\end{mydef}

\begin{rem}[Full-Monge matrix]
~\\
\label{req:FullMonge}
The important case for our purpose is the equality case when a matrix is both Monge and Anti-Monge, we will call this situation "Full-Monge" matrix.
\begin{equation*}
	c_{u,v} + c_{u',v'} = c_{u',v}+c_{u,v'} ~ \forall~ 1 \le u \le u' \le p, ~1 \le v \le v' \le q
\end{equation*}
\end{rem}

Although it is poorly studied, the last introduced equality fits perfectly our purpose. The inequalities on the contrary, are common and can be met in diverse situations such as cumulative distribution functions, or copula theory.

\begin{rem}[Adjacent cells]
~\\
\label{rem:adj}
A straightforward but important derived property is the local adjacency cells equality: it is sufficient to satisfy the property of the remark~\ref{req:FullMonge} on adjacent cells, to ensure the obtainment of a "Full-Monge" matrix behavior for the global set of cells \textit{i.e.}:
\begin{equation*}\label{MongeOne}
	c_{u,v} + c_{u+1,v+1} = c_{u+1,v}+c_{u,v+1} ~ \forall~1 \le u \le p, ~1 \le v \le q
\end{equation*}
\end{rem}

Remark~\ref{rem:adj} is a key property to study Monge matrices since it gives a direct $\mathcal{O}(pq)$ algorithm to verify if a matrix is Monge.

Besides, a question emerges: which density function verifies the full Monge property? The following Proposition~\ref{prop:LabelMonge} gives an interesting answer.

\begin{prop}[Full-Monge matrix is equivalent to "Indetermination"]
~\\
\label{prop:LabelMonge}
A "full Monge matrix" necessarily represents an "indetermination coupling".
\end{prop}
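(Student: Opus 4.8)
The plan is to show that the Full-Monge equality forces any such coupling matrix to be \emph{additive}, i.e.\ of the form $\pi_{u,v}=a_u+b_v$ for some row-vector $(a_u)$ and column-vector $(b_v)$, and then to pin down $a_u$ and $b_v$ from the margin constraints so that $\pi$ coincides with $(\mu\oplus\nu)_{u,v}$.

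First I would invoke Remark~\ref{rem:adj}, which reduces the Full-Monge property to the adjacent-cell equality $c_{u,v}+c_{u+1,v+1}=c_{u+1,v}+c_{u,v+1}$. Rewriting this as $c_{u+1,v+1}-c_{u+1,v}=c_{u,v+1}-c_{u,v}$ shows that the horizontal increment $c_{u,v+1}-c_{u,v}$ does not depend on the row index $u$; call it $d_v$. Telescoping from the first column then gives $c_{u,v}=c_{u,1}+\sum_{j=1}^{v-1}d_j$, which is exactly the additive form $c_{u,v}=a_u+b_v$ upon setting $a_u:=c_{u,1}$ and $b_v:=\sum_{j=1}^{v-1}d_j$.

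Next I would impose the coupling constraints of Definition~\ref{def:coupling} on $\pi_{u,v}=a_u+b_v$. Summing over $v$ yields $q\,a_u+S_b=\mu_u$ with $S_b:=\sum_v b_v$, and summing over $u$ yields $p\,b_v+S_a=\nu_v$ with $S_a:=\sum_u a_u$; hence $a_u=(\mu_u-S_b)/q$ and $b_v=(\nu_v-S_a)/p$. Substituting back produces $\pi_{u,v}=\mu_u/q+\nu_v/p-(S_a/p+S_b/q)$.

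The final step --- and the only place where the probabilistic normalization really enters --- is to identify the constant. Summing the total mass gives $q\,S_a+p\,S_b=\sum_{u,v}\pi_{u,v}=\sum_u\mu_u=1$, and dividing by $pq$ yields $S_a/p+S_b/q=1/(pq)$. Therefore $\pi_{u,v}=\mu_u/q+\nu_v/p-1/(pq)=(\mu\oplus\nu)_{u,v}$, which is precisely the indetermination coupling of Theorem~\ref{th:MTM}. I expect the additive decomposition to be the conceptual heart of the argument, while the main subtlety is exactly this collapse of the constant to $1/(pq)$: although the individual sums $S_a,S_b$ are not determined (reflecting the shift freedom $a_u\mapsto a_u+t$, $b_v\mapsto b_v-t$ in the decomposition), the relevant combination $S_a/p+S_b/q$ \emph{is} fixed, and this rigidity comes not from the Full-Monge structure alone but from $\pi$ being a genuine probability coupling whose entries sum to one.
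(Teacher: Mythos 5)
Your proof is correct, but it takes a genuinely different route from the paper's. The paper's own argument is a one-step averaging: it sums the Full-Monge identity $c_{u,v}+c_{u',v'}=c_{u',v}+c_{u,v'}$ over all $u'$ and $v'$, which immediately yields $pq\,c_{u,v}+c_{\cdot,\cdot}-q\,c_{\cdot,v}-p\,c_{u,\cdot}=0$, i.e.\ $c_{u,v}=c_{u,\cdot}/q+c_{\cdot,v}/p-c_{\cdot,\cdot}/(pq)$; for a probability matrix the row and column sums are $\mu_u$, $\nu_v$ and $c_{\cdot,\cdot}=1$, so the indetermination form drops out at once. You instead telescope the adjacent-cell identity of Remark~\ref{rem:adj} into the additive decomposition $\pi_{u,v}=a_u+b_v$ and then pin down the constants from the margin constraints and the normalization $\sum_{u,v}\pi_{u,v}=1$. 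What your route buys is structural insight: it isolates the additive ("$a_u+b_v$") characterization of Full-Monge matrices and makes explicit the shift freedom $a_u\mapsto a_u+t$, $b_v\mapsto b_v-t$, showing precisely where the probabilistic hypothesis rigidifies the constant to $1/(pq)$. What the paper's route buys is brevity and a touch more generality: the summation identity expresses \emph{any} real Full-Monge matrix through its own row sums, column sums and total mass, with the probability assumption entering only in the final substitution. One small remark: your last step is not actually independent of the preceding ones --- summing $q\,a_u+S_b=\mu_u$ over $u$ already gives $q\,S_a+p\,S_b=\sum_u\mu_u=1$ --- so the "collapse of the constant" is forced by the margin constraints together with $\mu$ being a probability, and that step could be folded into the previous one; as written it is harmless.
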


\begin{proof}
~\\
Summing on $u’$ and $v’$ the equality of remark~\ref{req:FullMonge} we straightforwardly obtain:
\begin{equation*}
\sum_{u'}\sum_{v'}\left(c_{u,v}+c_{u',v'}-c_{u',v}-c_{u,v'}\right) = pqc_{u,v}+c_{\cdot,\cdot} -qc_{\cdot,v}-pc_{u,\cdot} = 0 
\end{equation*}
which rewrites:
\begin{equation*}
c_{u,v} = \frac{c_{u,\cdot}}{q} + \frac{c_{\cdot,v}}{p}-\frac{c_{\cdot,\cdot}}{pq}
\end{equation*}

\end{proof}

By summarizing properties we get the following Theorem~\ref{th:Monge}.

\begin{theorem}[Full-Monge matrices]\label{th:Monge}
~\\
$\pi$ representing a probability matrix, the following properties are equivalent.
\begin{enumerate}
\item $\pi$ is a Full-Monge matrix
\item $\pi_{u,v} = \pi^+_{u,v} = \frac{\mu_u}{q} + \frac{\nu_v}{p} - \frac{1}{pq}$
\item $\pi$ optimizes problem~\ref{pb:MTM} for some given margins
\item All $2\times 2$ sub-tables $\{u,v,u',v'\}$ extracted from $\pi$ have the same sum on their diagonal and anti-diagonal
\end{enumerate}
\end{theorem}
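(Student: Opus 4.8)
The plan is to establish the equivalence of the four properties by proving a cycle of implications, which is the most economical route given that several of the pieces are already available from earlier results. Specifically, I would show $(1)\Rightarrow(2)\Rightarrow(3)\Rightarrow(1)$ to close the loop among the first three, and then treat $(4)$ by proving $(1)\Leftrightarrow(4)$ directly, since the Full-Monge condition and the equal-diagonal-sums condition are almost verbal restatements of one another.

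First I would handle $(1)\Rightarrow(2)$, which is essentially done: Proposition~\ref{prop:LabelMonge} already shows that a Full-Monge matrix satisfies $c_{u,v} = \frac{c_{u,\cdot}}{q} + \frac{c_{\cdot,v}}{p} - \frac{c_{\cdot,\cdot}}{pq}$. I only need to specialize this to a probability matrix $\pi$, where the row sums are the margins $\pi_{u,\cdot} = \mu_u$, the column sums are $\pi_{\cdot,v} = \nu_v$, and the total mass is $\pi_{\cdot,\cdot} = 1$, which immediately yields $\pi_{u,v} = \frac{\mu_u}{q} + \frac{\nu_v}{p} - \frac{1}{pq} = \pi^+_{u,v}$. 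For $(2)\Rightarrow(3)$, I would invoke Theorem~\ref{th:MTM}, which states that $\pi^+$ is exactly the (unique) solution of Problem~\ref{pb:MTM} for the margins $\mu,\nu$; so a matrix equal to $\pi^+$ optimizes that problem by construction. For $(3)\Rightarrow(1)$, I would again use Theorem~\ref{th:MTM} to identify the optimizer with the explicit additive form, and then verify directly that any matrix of the form $\frac{\mu_u}{q}+\frac{\nu_v}{p}-\frac{1}{pq}$ satisfies the Full-Monge equality — a one-line check, since the cross-difference $\pi_{u,v}+\pi_{u',v'}-\pi_{u',v}-\pi_{u,v'}$ telescopes to zero for this separable-plus-constant form.

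For the equivalence of $(4)$, I would note that the condition ``all $2\times 2$ sub-tables have equal diagonal and anti-diagonal sums'' is precisely the statement $\pi_{u,v}+\pi_{u',v'} = \pi_{u',v}+\pi_{u,v'}$ for all index pairs, which is the defining Full-Monge equality of Remark~\ref{req:FullMonge}. Thus $(1)\Leftrightarrow(4)$ is definitional up to phrasing; the only substantive content is recognizing that the ``diagonal'' of the sub-table $\{u,v,u',v'\}$ consists of $\pi_{u,v}$ and $\pi_{u',v'}$ while the ``anti-diagonal'' consists of $\pi_{u',v}$ and $\pi_{u,v'}$, so the two formulations coincide.

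The main obstacle is not really the mathematics — each implication is short — but rather the bookkeeping around uniqueness and the domain of validity. In particular, the step $(2)\Rightarrow(3)\Rightarrow(1)$ relies on $\pi^+$ being a genuine probability matrix (nonnegative entries), which is only guaranteed under Condition~\ref{cond:H}; I would therefore state explicitly that the equivalence is understood under the standing assumption that Condition~\ref{cond:H} holds, so that $\pi^+$ is admissible and Theorem~\ref{th:MTM} applies. A secondary subtlety is that property $(3)$ refers to optimizing Problem~\ref{pb:MTM} ``for some given margins,'' so I would make sure the margins read off from $\pi$ (its row and column sums) are exactly the ones for which $\pi^+$ is the optimizer, closing any gap between ``$\pi$ optimizes for its own margins'' and ``$\pi$ equals $\pi^+$.''
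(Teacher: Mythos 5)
Your proposal is correct and follows essentially the same route as the paper, which presents Theorem~\ref{th:Monge} as a summary assembled from Proposition~\ref{prop:LabelMonge} (giving $(1)\Rightarrow(2)$), Theorem~\ref{th:MTM} (linking $(2)$ and $(3)$), and the observation that $(4)$ is just a restatement of the Full-Monge equality of Remark~\ref{req:FullMonge}. Your extra touches --- the explicit telescoping check that the separable-plus-constant form satisfies the Full-Monge equality, and the remark that Condition~\ref{cond:H} (adopted in the paper as a standing assumption right after Theorem~\ref{th:MTM}) guarantees nonnegativity of $\pi^+$ --- merely fill in details the paper leaves implicit.
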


Last property of Theorem~\ref{th:Monge} is illustrated on Figure~\ref{fig:ex+} and justifies the $\oplus$ notation assigned to "indetermination". Indeed, if we take blue and red arrows we get the same resulting value: 0. Using the contingency form:
\begin{eqnarray*}
\textnormal{blue arrows} &:& 3+2-1-4 = 0\\
\textnormal{red arrows} &:& 3+2-4-1 = 0
\end{eqnarray*}
Equality obviously remains true for the probability form.

\begin{figure}
\begin{center}
\begin{tabular}{cc}
\begin{tikzpicture}
\matrix (m) [matrix of math nodes]
{
 3 & 4 & 2 & \textbf{9} \\
 2 & 3 & 1 & \textbf{6} \\
 1 & 2 & 0 & \textbf{3} \\
 3 & 4 & 2 & \textbf{9} \\
 \textbf{9} & \textbf{13} & \textbf{5} & \textbf{27} \\
};
\draw[<->,blue] (m-1-1.center) -- (m-3-2.center);
\draw[<->,blue] (m-3-1.center) -- (m-1-2.center);
\draw[<->,red] (m-2-2.center) -- (m-4-3.center);
\draw[<->,red] (m-4-2.center) -- (m-2-3.center);
\end{tikzpicture}
&
\begin{tikzpicture}[thick,scale=0.8, every node/.style={scale=0.8}]
\matrix (m) [matrix of math nodes]
{
 1/9 & 4/27 & 2/27 & \textbf{1/3} \\
 2/27 & 1/9 & 1/27 & \textbf{2/9} \\
 1/27 & 2/27 & 0 & \textbf{1/9} \\
 1/9 & 4/27 & 2/27 & \textbf{1/3} \\
 \textbf{1/3} & \textbf{13/27} & \textbf{5/27} & \textbf{1} \\
};
\draw[<->,blue] (m-1-1.center) -- (m-3-2.center);
\draw[<->,blue] (m-3-1.center) -- (m-1-2.center);
\draw[<->,red] (m-2-2.center) -- (m-4-3.center);
\draw[<->,red] (m-4-2.center) -- (m-2-3.center);
\end{tikzpicture}
\end{tabular}
\caption{Example of an indetermination coupling (Statistical counting vs Probability forms)}
\label{fig:ex+}
\end{center}
\end{figure}
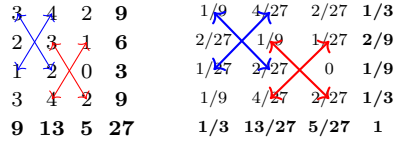

\subsection{Log-Monge property -- Independence}

We present hereafter a similar class of Matrices related, now, to independence: called Log-Monge matrices. They are built on the same principle as before through definition~\ref{def:logMonge}.

\begin{mydef}[Full-Log-Monge Matrices]\label{def:logMonge}~\\
A strictly positive $p\times q$ matrix $c_{u,v}$ is "Full-Log-Monge" when:
\begin{equation*}
	\ln(c_{u,v}) + \ln(c_{u',v'}) = \ln(c_{u',v}) + \ln(c_{u,v'}) ~ \forall~ 1 \le u \le u' \le p, ~1 \le v \le v' \le q
\end{equation*}
\end{mydef}

To immediately get the correspondence, we propose a transposition from a property to another using logarithm in Remark 7. It supposes matrices to be strictly positive (for our probability application: whole discrete space must be reached).

\begin{rem}[From Log-Monge to Monge]\label{req:LogToNormal}
~\\
We easily verify that $c$ satisfies condition proposed in definition~\ref{def:logMonge} if and only if $\ln(c)$ verifies the equivalent condition in definition~\ref{def:Monge} where logarithm is taken element-wise.
\end{rem}

Using Remark~\ref{req:LogToNormal}, we can check that Full-Log-Monge property leads to interesting results and is linked to "independence coupling"; without detailing their obtainment, we gather those results within Theorem~\ref{th:LogMonge}, dual of Theorem~\ref{th:Monge}.

\begin{theorem}[Full-Log-Monge Matrices]\label{th:LogMonge}
~\\
$\pi$ being a strictly positive probability matrix all those properties are equivalent:
\begin{enumerate}
\item $\pi_{u,v}$ is Full-Log-Monge
\item $\pi_{u,v} = \pi^{\times}_{u,v} = \mu_u\nu_v$
\item $\pi$ optimizes problem~\ref{pb:AWE}
\item All $2\times 2$ sub-tables $\{u,v,u',v'\}$ extracted from $\pi$ have the same product on their diagonal and anti-diagonal.
\end{enumerate}
\end{theorem}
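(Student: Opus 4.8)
The plan is to prove the whole cycle of equivalences by transferring everything to the additive Full-Monge case of Theorem~\ref{th:Monge} through the logarithm, exactly as Remark~\ref{req:LogToNormal} suggests. Since $\pi$ is assumed strictly positive, the element-wise logarithm $\ln\pi$ is well defined, and Remark~\ref{req:LogToNormal} tells us that $\pi$ is Full-Log-Monge if and only if $\ln\pi$ is Full-Monge. This single observation lets me recycle the work already done for $\oplus$ rather than redo the Lagrangian analysis.

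First I would establish (1) $\Leftrightarrow$ (4) directly from Definition~\ref{def:logMonge}. Rewriting the defining equality as $\ln(\pi_{u,v}\pi_{u',v'}) = \ln(\pi_{u',v}\pi_{u,v'})$ and exponentiating (legitimate by strict positivity) yields $\pi_{u,v}\pi_{u',v'} = \pi_{u',v}\pi_{u,v'}$ for every $2\times2$ sub-table, which is precisely the statement that the diagonal and anti-diagonal products agree. So (1) and (4) are just two readings of the same identity.

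Next, for (1) $\Leftrightarrow$ (2), I would apply Theorem~\ref{th:Monge} (the equivalence of its items 1 and 2) to the matrix $c=\ln\pi$, which is Full-Monge by Remark~\ref{req:LogToNormal}. Proposition~\ref{prop:LabelMonge} then gives the additive separable form $\ln\pi_{u,v} = a_u + b_v$, i.e.\ $\pi_{u,v} = e^{a_u}e^{b_v}$, a product of a row factor and a column factor. The one genuinely nontrivial step is to pin this product form down to $\mu_u\nu_v$ using that $\pi$ is a probability matrix with margins $\mu,\nu$: summing over $v$ gives $\mu_u = e^{a_u}\sum_v e^{b_v}$, summing over $u$ gives $\nu_v = e^{b_v}\sum_u e^{a_u}$, and the grand total satisfies $\left(\sum_u e^{a_u}\right)\left(\sum_v e^{b_v}\right)=1$. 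Multiplying the two marginal identities and dividing by the total returns $\pi_{u,v}=\mu_u\nu_v$. The converse (2) $\Rightarrow$ (1) is immediate, since $\ln(\mu_u\nu_v)=\ln\mu_u+\ln\nu_v$ is additively separable and hence trivially Full-Monge.

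Finally, (2) $\Leftrightarrow$ (3) is essentially Theorem~\ref{th:AWE}: the balanced entropy problem with margins $\mu,\nu$ (Problem~\ref{pb:AWEConstaints}) admits $\mu_u\nu_v$ as its optimal solution, and since the entropy objective is strictly convex this optimizer is unique, so a feasible $\pi$ optimizes it exactly when $\pi_{u,v}=\mu_u\nu_v$. I expect the main obstacle to be purely the bookkeeping in the (1) $\Rightarrow$ (2) direction, namely converting the abstract product form $e^{a_u}e^{b_v}$ into the normalized independence coupling, together with keeping strict positivity in force throughout so that the logarithm, and hence the whole transfer from Theorem~\ref{th:Monge}, remains valid.
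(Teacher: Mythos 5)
Your proposal is correct and follows exactly the route the paper indicates for Theorem~\ref{th:LogMonge}: the paper states it without detailed proof, invoking Remark~\ref{req:LogToNormal} to transfer to the additive case of Theorem~\ref{th:Monge}, which is precisely your transfer via $\ln\pi$ (and your normalization step pinning $e^{a_u}e^{b_v}$ down to $\mu_u\nu_v$ supplies the detail the paper omits). Note also that your reading of item~3 as the \emph{balanced} Problem~\ref{pb:AWEConstaints} ``for some given margins'' is the intended one --- taken literally, the cited Problem~\ref{pb:AWE} has the uniform law as unique optimizer, so the equivalence would fail; this is evidently a typo in the theorem statement, mirroring item~3 of Theorem~\ref{th:Monge}.
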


Figure~\ref{fig:exx} illustrates "Full Log-Monge" matrices and their properties related to "independence"; it justifies the usual $\otimes$ notation. It is important to remark that both those matrices (in Figure~\ref{fig:ex+} and Figure~\ref{fig:exx}) optimize a problem where the unique difference is the cost functions (since the margins are strictly identical).

\begin{figure}
\begin{center}
\begin{tabular}{cc}
\begin{tikzpicture}[thick,scale=1.1, every node/.style={scale=1.1}]
\matrix (m) [matrix of math nodes]
{
 3 & 13/3 & 5/3 & \textbf{9} \\
 2 & 26/9 & 10/9 & \textbf{6} \\
 1 & 13/9 & 5/9 & \textbf{3} \\
 3 & 13/3 & 5/3 & \textbf{9} \\
 \textbf{9} & \textbf{13} & \textbf{5} & \textbf{27} \\
};
\draw[<->,red] (m-1-1.center) -- (m-3-2.center);
\draw[<->,red] (m-3-1.center) -- (m-1-2.center);
\draw[<->,blue] (m-2-2.center) -- (m-4-3.center);
\draw[<->,blue] (m-4-2.center) -- (m-2-3.center);
\end{tikzpicture}
&
\begin{tikzpicture}
\matrix (m) [matrix of math nodes]
{
 1/9 & 13/81 & 5/81 & \textbf{1/3} \\
 2/27 & 26/243 & 10/243 & \textbf{2/9} \\
 1/27 & 13/243 & 5/243 & \textbf{1/9} \\
 1/9 & 13/81 & 5/81 & \textbf{1/3} \\
 \textbf{1/3} & \textbf{13/27} & \textbf{5/27} & \textbf{1} \\
};
\draw[<->,red] (m-1-1.center) -- (m-3-2.center);
\draw[<->,red] (m-3-1.center) -- (m-1-2.center);
\draw[<->,blue] (m-2-2.center) -- (m-4-3.center);
\draw[<->,blue] (m-4-2.center) -- (m-2-3.center);
\end{tikzpicture}
\end{tabular}
\end{center}
\caption{Example of an "independence coupling" (Contingency vs Probability forms)}
\label{fig:exx}
\end{figure}

\section{Logical "indetermination" and "Condorcet's voting equilibrium"}
\label{sec:indetermination}

In section~\ref{sec:indetermination}, our latent goal is to better understand the "indetermination coupling", that we have until now essentially introduced on a theoretical point of view. Although obtained through a similar process, "independence coupling" is straightforwardly linked to classical empirical experiences. $\pi^+$ does not share this latent simplicity and interpreting it, \textit{per se}, is clearly a domain which deserves to be investigated. We present an attempt for helping the reader to make an accurate picture about the "indetermination" concepts.

Interest for the coupling will be reinforced as we show it corresponds to the Condorcet's majority equilibrium. Defining a "for" vs "against" notion will lead us to a formal equality interpreting "indetermination" in another space. In fact we are faced with the famous "Condorcet's voting equilibrium", which amounts to exhibit the situation where the number of opinions "for" balances exactly the number of opinions "against". The demonstration of this property requires the use of "Mathematical Relational Analysis" notations, which will be formally defined hereafter. We do not want in the context of this article to develop an exhaustive overview of this theory and its applications but pick up some results in connection with the goals we want to achieve; most of them being extracted from the following list of papers which gathers some of the most important key features about the subject: \cite{MAM79}, \cite{MAR84}, \cite{MES90}, \cite{Opitz05}, \cite{MAR86}, \cite{APThese}, \cite{AHP09a}.

We also interpret the equilibrium between the "yes/for" (agreements) and the "no/against" (disagreements) as a voting "indetermination situation". This implies: since the number of votes "for" equals the number of votes "against" we are in a situation, where it is impossible to take a decision. The term: "indetermination" ("indeterminacy" or "uncertainty" should have been used as well) is a formal translation of this surprising situation. First of all, let us introduce properly Relational Analysis notations that we shall use later on.

\begin{mydef}[Relational Analysis notations]
~\\
\label{def:arm_notations}
Let $(u_1,\ldots,u_n)$ and $(v_1,\ldots,v_n)$ be two $n$ probabilistic draws of $U\sim\mu$ and $V\sim\nu$.
We define two associated symmetric $n \times n$ matrices $X$ and $Y$ by 
\begin{eqnarray*}
X_{i,j} = \mathbf{1}_{u_i=u_j} &\textnormal{ and }& Y_{i,j} = \mathbf{1}_{v_i=v_j},~ \forall 1\le i,j \le n
\end{eqnarray*}
\end{mydef}

Basically, the two binary matrices $X$ and $Y$ (which correspond in fact to two binary equivalence relations based on the drawn modalities) represent agreements and disagreements of the two variables on a same draw of size $n$; they are symmetric with $1$ values on their diagonal. This relational coding has a lot of powerful properties, which will not be presented in this paper but which can be found in the articles we mentioned beforehand.

Definition~\ref{def:arm_notations} immediately provides us with an algorithm to transfer contingency representations to relational ones. The way back consists in noticing that:

$X_{i,j} = 1$ if and only if $i$ and $j$ share the same modality of $U\sim\mu$.

Hence we assign a modality to each class defined by the equivalence relation embedded in $X$: the only loss of information during this process resides in the names of modalities.

Now, we are ready to present the Theorem justifying the name "indetermination":

\begin{theorem}[$\pi^+$ and Condorcet's equilibrium]
~\\
\label{th:condorcetEquilibrium}

$\pi$ being a cross probability law on a set of $p\times q$ categorical variables, we shall say that $\pi$ is an "indetermination coupling" on its margins, if and only if the expected number of "agreements" equals the number of "disagreements" on a $2$ independent drawings of $\pi$.
\end{theorem}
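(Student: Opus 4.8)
The plan is to unfold both sides of the claimed equivalence through the relational coding of Definition~\ref{def:arm_notations} and to reduce the ``agreement $=$ disagreement'' balance to the cell-wise characterisation of the indetermination coupling already available in Theorem~\ref{th:Monge}. First I would fix the two independent draws $(u_1,v_1)$ and $(u_2,v_2)$ from $\pi$ and read off the associated relational entries $X_{1,2}=\mathbf 1_{u_1=u_2}$ and $Y_{1,2}=\mathbf 1_{v_1=v_2}$. The pair of objects votes ``for'' keeping them together through each variable and ``against'' through each variable; an \emph{agreement} is a pair of concordant votes and a \emph{disagreement} a pair of discordant ones. The crucial modelling choice, which encodes that indetermination measures a deviation from the \emph{uniform} situation and not from independence, is that each variable's vote is read against its uniform baseline ($\tfrac1p$ for $U$ over its $p$ modalities, $\tfrac1q$ for $V$ over its $q$), i.e. through the centred indicators $X_{1,2}-\tfrac1p$ and $Y_{1,2}-\tfrac1q$.

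Next I would compute the expected agreement and disagreement counts and translate them into contingency quantities using the dictionary furnished by Definition~\ref{def:arm_notations}: conditionally on the first draw landing in cell $(u,v)$ one has $\mathbb E[X_{1,2}\mid\cdot]=\mu_u$, $\mathbb E[Y_{1,2}\mid\cdot]=\nu_v$ and $\mathbb E[X_{1,2}Y_{1,2}\mid\cdot]=\pi_{u,v}$. Expanding the centred product then yields
$$\mathbb E\!\left[\left(X_{1,2}-\tfrac1p\right)\left(Y_{1,2}-\tfrac1q\right)\,\middle|\,(u_1,v_1)=(u,v)\right]=\pi_{u,v}-\frac{\mu_u}{q}-\frac{\nu_v}{p}+\frac1{pq}=\pi_{u,v}-\pi^{+}_{u,v},$$
so that the conditional excess of agreements over disagreements at cell $(u,v)$ is exactly the local deviation of $\pi$ from the indetermination coupling $\pi^{+}$.

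From this identity both implications follow at once. If $\pi=\pi^{+}$ then every local deviation vanishes and agreements balance disagreements in each cell, hence in expectation overall; conversely, requiring the balance cell by cell forces $\pi_{u,v}=\tfrac{\mu_u}{q}+\tfrac{\nu_v}{p}-\tfrac1{pq}$ for all $(u,v)$, which is precisely item~(2) of Theorem~\ref{th:Monge} and therefore equivalent to $\pi$ being a Full-Monge, i.e. indetermination, coupling. Alternatively, reading the two drawn objects as the $2\times2$ sub-table $\{u_1,u_2\}\times\{v_1,v_2\}$, the agreement/disagreement equality is the equality of its diagonal and anti-diagonal sums, which is item~(4) of Theorem~\ref{th:Monge}; invoking that theorem closes the argument without recomputing the optimum of Problem~\ref{pb:MTM}.

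The step I expect to be delicate is the faithful translation of the combinatorial agreement/disagreement counts into the contingency sums, i.e. pinning down the relational-to-contingency dictionary and, above all, making sure the balance is imposed \emph{per cell} (equivalently per $2\times2$ sub-table) rather than only globally: a single averaged identity such as $\mathbb E[(X_{1,2}-\tfrac1p)(Y_{1,2}-\tfrac1q)]=0$ is merely a scalar consequence and is satisfied by many couplings, whereas the cell-resolved balance is what actually characterises $\pi^{+}$. Handling the centred baselines $\tfrac1p$ and $\tfrac1q$ correctly is what produces the asymmetric coefficients of $\pi^{+}$, and is where an error would be easiest to make.
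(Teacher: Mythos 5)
Your conditional computation is correct, and your centred coding is closer to the paper's than you suggest: the four weights in the paper's balance (equation~\ref{eq:ac_des_eq}), namely $\frac{1}{pq}$, $\frac{1}{p(p-1)q(q-1)}$, $-\frac{1}{pq(q-1)}$ and $-\frac{1}{p(p-1)q}$, are exactly $\frac{1}{(p-1)(q-1)}\left(X_{i,j}-\frac{1}{p}\right)\left(Y_{i,j}-\frac{1}{q}\right)$ evaluated at the four points of $\{0,1\}^2$. So, up to the positive factor $(p-1)(q-1)$, your "agreements minus disagreements" statistic is the same as the paper's, and your identity $\mathbb{E}\left[\left(X_{1,2}-\frac{1}{p}\right)\left(Y_{1,2}-\frac{1}{q}\right)\,\middle|\,(u_1,v_1)=(u,v)\right]=\pi_{u,v}-\pi^{+}_{u,v}$ is a correct cell-resolved refinement of the paper's expansion of equation~\ref{eq:ac_des_eq_prob}. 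The forward implication ($\pi=\pi^{+}$ gives the balance) is fine.

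The genuine gap is in the backward implication. Theorem~\ref{th:condorcetEquilibrium} asserts equivalence with a \emph{single scalar} condition, the expected balance over two independent drawings; you instead impose the balance cell by cell, and you justify this strengthening by claiming that the averaged identity $\mathbb{E}\left[\left(X_{1,2}-\frac{1}{p}\right)\left(Y_{1,2}-\frac{1}{q}\right)\right]=0$ "is satisfied by many couplings". That claim is false, and refuting it is precisely the content of the paper's proof. Averaging your own conditional identity over the first draw gives $\mathbb{E}\left[\left(X_{1,2}-\frac{1}{p}\right)\left(Y_{1,2}-\frac{1}{q}\right)\right]=\sum_{u,v}\pi_{u,v}\left(\pi_{u,v}-\pi^{+}_{u,v}\right)$; since $\pi-\pi^{+}$ has null row and column margins while $\pi^{+}_{u,v}=\frac{\mu_u}{q}+\frac{\nu_v}{p}-\frac{1}{pq}$ is additively separable in $u$ and $v$, the cross term $\sum_{u,v}\pi^{+}_{u,v}\left(\pi_{u,v}-\pi^{+}_{u,v}\right)$ vanishes, so the expectation equals $\sum_{u,v}\left(\pi_{u,v}-\pi^{+}_{u,v}\right)^2\ge 0$, which is zero if and only if $\pi=\pi^{+}$ in every cell. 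This sum-of-squares step --- the paper's appeal to "a classical equality similar to equation~\ref{devCarre}", which turns $pq\sum_{u,v}\pi_{u,v}^2-p\sum_u\pi_{u,\cdot}^2-q\sum_v\pi_{\cdot,v}^2+1=0$ into $pq\sum_{u,v}\left(\pi_{u,v}-\pi^{+}_{u,v}\right)^2=0$ --- is the missing idea. With it, the scalar expected balance already characterizes $\pi^{+}$ and your per-cell hypothesis becomes an unnecessary strengthening; without it, your argument establishes only the weaker statement in which the balance is assumed conditionally on every cell, which is not the theorem as stated.
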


\begin{proof}
~\\
Let $\pi$ be a probability law on $p\times q$ categorical variables; it's defined through its values $\pi_{u,v}$, $1\le u\le p$ and $1\le v \le q$. $U$ and $V$ are random variables representing its margins. By $n$ drawings through $\pi$, hence n samplings of $(U,V)$, $U$ and $V$ generates two partitions (equivalence relations) of the $n$ individuals based on their modalities. 

We will say that an agreement occurs when both partitions simultaneously gather or separate the individuals $i$ and $j$. A disagreement occurs on the contrary when a classification regroups $i$ and $j$ while the other one separates them. Formally, if $X,Y$ encodes the $n$ samplings as defined in Definition~\ref{def:arm_notations}:

\begin{itemize}
	\item $X_{i,j}Y_{i,j} = 1$, agreement of type $11$, there are $pq$ couples of classes possible for two individuals $i$ and $j$ to realize this type of agreement
	\item $\overline{X}_{i,j}\overline{Y}_{i,j} = 1$, agreement of type $00$, there are $p(p-1)q(q-1)$ couples of classes of this type
	\item $X_{i,j}\overline{Y}_{i,j} = 1$, disagreement of type $10$, there are $pq(q-1)$ couples of classes of this type
	\item $\overline{X}_{i,j}Y_{i,j} = 1$, disagreement of type $01$, there are $p(p-1)q$ couples of classes of this type
\end{itemize}

As quantities vary according we propose the following equality which establishes that the weighted number of agreements equals the weighted number of disagreements:
\begin{equation}\label{eq:ac_des_eq}
	\frac{XY}{pq} + \frac{\overline{X}\overline{Y}}{p(p-1)q(q-1)} = \frac{X\overline{Y}}{pq(q-1)} + \frac{\overline{X}Y}{p(p-1)q}
\end{equation} 
with the scalar product notation 
$$XY=\sum_{i=1}^n\sum_{j=1}^n\left(X_{i,j}Y_{i,j}\right)$$

Equality~\ref{eq:ac_des_eq} is intrinsically important. It is defined on a draw of size $n$ and linked to a contingency indetermination. 
We take two draws at random independently under $\pi$: $(u_i,v_i)$ and $(u_j,v_j)$ and introduce a probabilistic equality based on our $2$ draws.

\begin{equation}\label{eq:ac_des_eq_prob}
	\frac{\mathbb{E}_{\pi\otimes\pi}\left(X_{i,j}Y_{i,j}\right)}{pq} + \frac{\mathbb{E}_{\pi\otimes\pi}\left(\overline{X}_{i,j}\overline{Y}_{i,j}\right)}{p(p-1)q(q-1)} = \frac{\mathbb{E}_{\pi\otimes\pi}\left(X_{i,j}\overline{Y}_{i,j}\right)}{pq(q-1)} + \frac{\mathbb{E}_{\pi\otimes\pi}\left(\overline{X}_{i,j}Y_{i,j}\right)}{p(p-1)q} 
\end{equation} 

We shall notice now that equality~\ref{eq:ac_des_eq_prob} precisely occurs when $\pi$ equals the indetermination coupling of its margins with the formula introduced in Theorem~\ref{th:MTM}. Let us compute the result of two-sized independent draws under $\pi$.
\begin{itemize}
	\item $\mathbb{E}_{\pi\otimes\pi}(X_{i,j}Y_{i,j}) = \sum_{u_i,v_i}\sum_{u_j,v_j}\pi_{u_i,v_i}\pi_{u_j,v_j}\mathbf{1}_{u_i=u_j \& v_j=v_j} = \sum_{u,v}\pi_{u,v}^2$
	\item $\mathbb{E}_{\pi\otimes\pi}(\overline{X_{i,j}Y_{i,j}}) = \sum_{u_i,v_i}\sum_{u_j,v_j}\pi_{u_i,v_i}\pi_{u_j,v_j}\mathbf{1}_{u_i\neq u_j \& v_i\neq v_j} = \sum_{u,v}\pi_{u,v}(1-\pi_{u,\cdot}-\pi_{\cdot,v} +\pi_{u,v})$
	\item $\mathbb{E}_{\pi\otimes\pi}(X_{i,j}\overline{Y_{i,j}}) = \sum_{u_i,v_i}\sum_{u_j,v_j}\pi_{u_i,v_i}\pi_{u_j,v_j}\mathbf{1}_{u_i=q u_j \& v_i\neq v_j} = \sum_{u,v}\pi_{u,v}(\pi_{u,\cdot}-\pi_{u,v})$ (and similarly for $\mathbb{E}_{\pi\otimes\pi}(\overline{X_{i,j}}Y_{i,j}))$
\end{itemize}

Inserting into equation~\ref{eq:ac_des_eq_prob}, we get:
\begin{eqnarray*}
	&&\frac{\sum_{u,v}\pi_{u,v}^2}{pq} + \frac{\sum_{u,v}\pi_{u,v}(1-\pi_{u,\cdot}-\pi_{\cdot,v} +\pi_{u,v})}{p(p-1)q(q-1)}\\
	&& = \frac{\sum_{u,v}\pi_{u,v}(\pi_{u,\cdot}-\pi_{u,v})}{pq(q-1)} + \frac{\sum_{u,v}\pi_{u,v}(\pi_{\cdot,v}-\pi_{u,v})}{p(p-1)q}\\
	&&\mbox{Reducing to same denominator, we get:}\\
	&&(p-1)(q-1)\sum_{u,v}\pi_{u,v}^2 + \sum_{u,v}\pi_{u,v}(1-\pi_{u,\cdot}-\pi_{\cdot,v} +\pi_{u,v})\\
	&& = (p-1)\sum_{u,v}\pi_{u,v}(\pi_{u,\cdot}-\pi_{u,v}) + (q-1)\sum_{u,v}\pi_{u,v}(\pi_{\cdot,v}-\pi_{u,v})\\
	&&\mbox{regrouping the similar terms yields:}\\
	&& pq\sum_{u,v}\pi_{u,v}^2 -p\sum_{u}\pi_{u,\cdot}^2 -q\sum_{v}\pi_{\cdot,v}^2 +1 =0\\
	&&\mbox{Making use of a classical equality similar to equation~\ref{devCarre}, we obtain:}\\
	&& pq\sum_{u,v}\left(\pi_{u,v} - \pi_{u,\cdot}/q -\pi_{\cdot,v}/p +1/pq\right)^2 =0\\		
	&&\mbox{Finally it holds:}\\
	&& \pi_{u,v} = \frac{\pi_{u,\cdot}}{q} + \frac{\pi_{\cdot,v}}{p} - \frac{1}{pq}
\end{eqnarray*}
We have proved that $\pi=\pi^+$ if and only if the expected number of normalized agreements equals the expected number of disagreements on a $2$-sized drawing.
\end{proof}

In another article~\cite{bertrand:hal-03086553} we interpret the cost function linked with indetermination as a way to reduce couple matchings and show the Condorcet's equilibrium conveys useful applications as it "hides" the underlying distribution.

\section{Application to network clustering}
\label{sec:Graphs}
\subsection{Introduction}

We limit ourselves to Louvain Algorithm, applied as an heuristic to optimize a global objective function. In few words let us say that the global optimization is obtained iteratively by optimizing a local cost function: where two nodes are said to be \textit{similar} if the value of a chosen criterion is high.

Conde Cespedes, in her thesis~\cite{PCThese}, gathered a large amount of network clustering criteria, coming from the scientific literature; she took advantage of this task to give them a category label, depending upon their relationship with both ”independence” or ”indetermination”. She compared them according to their ability to perform on various networks, and collected and stored the obtained results. Although we are in the quite same line with Patricia Conde-C{\'e}spedes, we restrict ourselves to investigate a focused study of both the canonic ones: ”deviation to independence” and ”deviation to indetermination” that we will reintroduce hereafter within the network theoretical context.

First let us start with some usual definitions for a graph:

\begin{mydef}[Weighted graph (network)]
\label{def:graph}
~\\
A weighted graph $G$, is a graph which contains $n$ nodes $1\le i \le n$, which are linked each other through edges $(i,j)$ linked with weights $a_{i,j}$ (representing a weighted incidence matrix). We also introduce the total weight $2M = \sum_{i,j} a_{i,j}$.
\end{mydef}

A basic way to randomly generate a network is through the Gilbert's distribution:
\begin{mydef}[Gilbert]
~\\
Fixing a number $n$ of nodes and $\epsilon \in [0,1]$, we link any set of two nodes by independently drawing though a Bernoulli law with parameter $\epsilon$ leading to a $0-1$ weight. The obtained network is non directed and each weight is $0$ or $1$.
\end{mydef}

\begin{rem}
~\\
Adding a parameter $p$ representing maximum weight, we can easily create a weighted graph by drawing a Binomial law with parameter $(\epsilon,p)$ while linking couples (instead of sets) generates directed networks.
\end{rem}

As mentioned in section~\ref{sec:intro}, our work will be devoted to the research of classes, groupings, clusters or cliques (whatever the name) within a network. They are defined through an equivalence relation as specified in definition~\ref{def:GC}:

\begin{mydef}[Graph clustering]
~\\
\label{def:GC}
Let us call $x$, a matrix representation of a binary equivalence relation, the result of the clustering of a graph $G$. Then $x_{i,j}$ equals $0$ or $1$ and equals $1$ if and only if the two nodes $i$ and $j$ are in the same class for $x$, and $0$ if not.
\end{mydef}

Clustering algorithms aim at providing classes maximizing internal similarities as well as minimizing external ones. A first option is to take as input the number $K$ of classes we are looking for, together with an associated distance (or dissimilarity index) and come up with a list of best representatives or "means" for each class. K-means algorithm whose idea goes back to the fifties (see\footnote{factually this is the method of S. Lloyd(1957) rewritten by E.W. Forgy (1965) which corresponds to the oldest version of the K-means really used} \cite{Steinhaus57}) typically illustrates this option. A second option, is to construct a local criterion $c$ which assigns a weight $c_{i,j}$ to each $(i,j)$ couple of nodes based on their similarity; the more similar they are, the higher the criterion is. We then build up a global criterion by summing up the local values $c_{i,j}$ if and only if $i$ and $j$ are in the same class as proposed in problem~\ref{pb:clustering}.

\begin{problem}[Generic clustering problem]\label{pb:clustering}
\begin{eqnarray*}
 \max_{x}&&{M(c,x) = \sum_{i=1}^n\sum_{j=1}^n{c_{i,j}x_{i,j}}}\\
 \textnormal{subject to:}&&\\
 x && \textnormal{is an equivalence relation (see definition~\ref{def:arm_notations})} 
\end{eqnarray*}
\end{problem}

First let us remark that, as notably spotted in~\cite{MAR86}, \cite{MAM79}, \cite{Opitz05} an equivalence relation constraint can be written as :
\begin{itemize}
\item[$\bullet$] $x_{i,i} = 1, ~ \forall 1\le i \le n$ (reflexivity)
\item[$\bullet$] $x_{i,j} = x_{j,i}, ~ \forall 1\le i,j \le n$ (symmetry)
\item[$\bullet$] $x_{i,j} + x_{j,k} - x_{j,k} \le 1, ~ \forall 1\le i,j,k \le n$ (transitivity)
\end{itemize}

Thanks to the linearity of these constraints, in addition to the linear expression of the criterion itself, the problem~\ref{pb:clustering} although \textit{a priori} NP-hard can be exactly solved (according to some conditions) through the integer relaxation of a good existing 0-1 linear programming code (see~\cite{MAM79}). But in the context of networks clustering, the size $n$ of the problem (here the number of nodes) can be really huge (millions for social networks) and the direct solving by linear programming, even specially tuned, is no longer possible; therefore, the use of robust heuristics becomes mandatory. As mentioned beforehand, Louvain Algorithm (see~\cite{GN02} or \cite{GN04}) belongs to this set of methods: it does not systematically provide us with an exact optimal result but just a quite good approximate one. We concentrate on the analysis of two canonic costs for which historical experiments are reported and explained at the light of the previous sections.

\subsubsection{Original Modularity -- "Independence"}
The original and famous Newman-Girvan's presentation of a global criterion for networks clustering, see~\cite{GN02} or \cite{GN04}, has been introduced in the Louvain algorithm together with a global cost called "Modularity" defined by:

\begin{mydef}[Modularity]
\label{def:mx}
~\\
Given a partition $x_{i,j}$ and a graph $G$ with weighted function $a$ on its edges, the global modularity returns to:
\begin{equation}
M^{\times}(G,x) = \frac{1}{2M}\sum_{i,j} \left[ a_{i,j} - \frac{a_{i,\cdot}a_{\cdot,j}}{2M}\right]x_{i,j}
\end{equation}
\end{mydef}

Let us first remark that the original modularity $M^{\times}$ is nothing but our generic global cost function defined though Problem~\ref{pb:clustering} with:
$$c_{i,j} = m^{\times}(G)_{i,j} = \frac{a_{i,j}}{2M} - \frac{a_{i,\cdot}a_{\cdot,j}}{(2M)^2}$$
and that the local gain $m^{\times}(G)_{i,j}$ to put two nodes in the same class is the local deviation to independence. Indeed, using definition~\ref{def:graph}, with $\pi_{i,j} = \frac{a_{i,j}}{2M}$ as a probability measure on $\{1\ldots n\}^2$ and margins $\mu_i = \frac{a_{i,\cdot}}{2M}$, $m^{\times}$ rewrites:
$$m^{\times}(G)_{i,j} = 2M\left(\pi_{i,j}-\mu_i\mu_j\right)$$
and does express itself as a canonic \textit{deviation to independence} criterion.

A second remark is that as $m^{\times}(G)_{i,j}$ expression does not contain absolute value or square elevation then non connected nodes will lead to negative weights preventing them from being allocated to the same class. If they are connected the importance of $m^{\times}(G)_{i,j}$ evolves positively as $i$ and $j$ have less edges ($a_{i,\cdot}$ and $a_{\cdot,j}$ small); here again this implies an appropriate behavior. More precisely, since independence ensures a coupling as uniform as possible with fixed margins (this is a solution of problem~\ref{pb:AWE}), $m^{\times}$ appears as a fair construction. The criterion basically measures a distance between the observed linkage weight and an expected flat weight given by the average neighborhood.

\subsubsection{Extended Modularity -- "Indetermination"}
We suggest an expression $m^{+}(G)_{i,j}$ which represents a deviation to indetermination. It will be used as a local cost function in Problem~\ref{pb:clustering} leading to a slightly different global formula $M^{+}(G,x)$ to optimize locally:
$$m^{+}(G)_{i,j} = a_{i,j} - \frac{a_{i,\cdot}}{n} - \frac{a_{\cdot,j}}{n} + \frac{2M}{n^2}$$

Symmetrically as $m^{\times}$, it rewrites as a canonic \textit{deviation to indetermination}:
$$m^{+}(G)_{i,j} = 2M*\left(\pi_{i,j}-\frac{\mu_i}{n} - \frac{\mu_j}{n} + \frac{1}{n^2}\right)$$

The global criterion being:
\begin{equation}
\label{def:m+}
M^{+}(G,x) = \sum_{i,j} \left[a_{i,j} - \frac{a_{i,\cdot}}{n} - \frac{a_{\cdot,j}}{n} + \frac{2M}{n^2}\right]x_{i,j}
\end{equation}

We have seen that both couplings share a lot of properties as shown in section~\ref{sec:Monge} and section~\ref{sec:indetermination}. In the same way, Patricia Conde-C{\'e}spedes noticed that a lot of statistical criteria measuring variables correlation are based either on a "distance to independence", or on a "distance to indetermination" (see~\cite{PCThese}). According to these remarks, our canonical \textit{deviation to indetermination} criterion $M^+$ deserves to have the same types of use as those dedicated to the Newman Girvan's $M^{\times}$. 

\subsection{Summary of an application to various networks}
\label{ssec:PatriciaExp}
Now the two criteria are properly introduced with a theoretical basis on their canonical structure, we gather in table~\ref{tab:PCappl} the number of classes found by Patricia Conde-C{\'e}spedes, who applied both on the same empirical networks. She got similar results, as expected beforehand on a bench of experimental graphs she properly defines in~\cite{PCThese} and for which we gather her results in table~\ref{tab:PCappl}. This table can be read as follows: for example, the "Internet" network contains $69,949$ nodes with $351,280$ edges; if we apply Louvain algorithm on, with the global criteria $M^{\times}$ we usually find $46$ communities, while $M^+$ leads to $39$. As anticipated in section~\ref{ssec:Ediff} criteria are (in average) very close (see for instance the Amazon case); consequently their resulting effect on various networks is quite similar. Section~\ref{ssec:Erdos} of the present paper provides the reader with an explanation of the assertion Patricia Conde-C{\'e}spedes experimented.

\begin{table}[!h]
\caption{Number of classes found by each criteria on various networks}
\label{tab:PCappl} 
\begin{tabular}{lllllll}
\hline\noalign{\smallskip}
& Karate & Football & Jazz & Internet & Amazon & YouTube\\
N (nb nodes) & 34 & 115 & 198 & 69 949 & 334 863 & 1 134 890\\
M (sum of weights) & 78 & 613 & 2 742 & 351 280 & 925 872 & 2 987 624\\
\noalign{\smallskip}\hline\noalign{\smallskip}
Number of classes for criteria $M^{\times}$ & 4 & 10 & 4 & 46 & 250 & 5 567\\
Number of classes for criteria $M^+$ & 4 & 10 & 6 & 39 & 246 & 13 985\\
\noalign{\smallskip}\hline
\end{tabular}
\end{table}

\subsection{Gilbert Experimental Tests}
\label{ssec:Erdos}

As already mentioned, solving problem~\ref{pb:clustering} is NP-hard so that we cannot expect precise results, neither about the number of classes for a given criterion, nor about their composition. Nevertheless, we can compare directly local values of the criterion to extrapolate a common or a distinct global run when iteratively optimized.

We propose a comparative try based on Gilbert's networks to spot differences or similarity between $m^{\times}(G)_{i,j}$ and $m^{+}(G)_{i,j}$ values. The aim is to observe the distribution of both criteria on a typical network. First, to simplify observations and as only the reference cost (\textit{i.e.} one equilibrium of section~\ref{sec:1}) varies between $m^+$ and $m^{\times}$, we only keep it by subtracting $a_{i,j}$; it is formally defined in definition~\ref{def:bias}. Then, we generate $1,000,000$ networks randomly, compute each criterion on a random pairs of nodes and store the reference cost; the results are gathered within figure~\ref{fig:critScatter}.

\begin{mydef}[Bias or reference cost]
~\\
\label{def:bias} 
The two bias derived from $m^{\times}$ and $m^{+}$ are respectively:
\begin{eqnarray*}
b^{\times}_{i,j} = \frac{a_{i,\cdot}a_{\cdot,j}}{2M} & \textnormal{ and } & b^{+}_{i,j} = \frac{a_{i,\cdot}}{n} + \frac{a_{\cdot,j}}{n} - \frac{2M}{n^2}
\end{eqnarray*}
\end{mydef}

\begin{figure}[h]
\begin{center}
\includegraphics[scale=0.5]{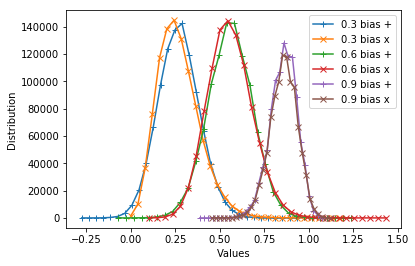}
\caption{Empirical distribution of the two reference costs $b^+_{i,j}$ and $b^{\times}_{i,j}$ for $\epsilon$ in [0.3, 0.6, 0.9]; X-axis gives the values of the bias, Y-axis gives the corresponding number of realizations} 
\label{fig:critScatter}
\end{center}
\end{figure}

On figure~\ref{fig:critScatter} we observe that the distributions of both biases are similar for any values of $\epsilon$. Indeed, the curves are identical on their core values (those with a number of realizations upon $20,000$). It illustrates their common origin which amounts to flatten a distribution (section~\ref{sec:1}) and leads to a small expected difference (section~\ref{ssec:Ediff}). Their common mean is equal to $\epsilon$ as it can be easily derived from the formulas. A difference nevertheless remains on extreme values particularly visible for $\epsilon=0.3$. Let us now compute theoretically both distributions under Gilbert's networks to confirm their symmetry.

\begin{prop}[Probability values]
~\\
\label{prop:probValues}
Let $b$ be a binary value, $b \le n_i \le n$ and $b \le n_j \le n$; let us compute the following probability:
\begin{eqnarray*}
&&\mathbb{P}(a_{i,j} = b, a_{i,\cdot} = n_i, a_{\cdot,i} = n_j)\\
&& = \epsilon^b(1-\epsilon)^{1-b} \binom{n-1}{n_i-b}\epsilon^{n_i-b}(1-\epsilon)^{n-1-n_i+b} \binom{n-1}{n_j-b}\epsilon^{n_j-b}(1-\epsilon)^{n-1-n_j+b}
\end{eqnarray*}
\end{prop}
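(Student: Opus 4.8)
The plan is to use the defining independence of the Gilbert model --- distinct unordered pairs of nodes carry independent Bernoulli$(\epsilon)$ weights --- and to express the three prescribed events as functions of pairwise disjoint sets of edges, so that their joint probability factorizes. First I would treat the edge $\{i,j\}$ on its own: its weight $a_{i,j}$ is a single Bernoulli$(\epsilon)$ variable, whence $\mathbb{P}(a_{i,j}=b)=\epsilon^{b}(1-\epsilon)^{1-b}$ for $b\in\{0,1\}$, which supplies the leading factor of the claimed expression.

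Next I would condition on the event $\{a_{i,j}=b\}$ and analyse the two degree constraints. Writing $a_{i,\cdot}=a_{i,j}+\sum_{k}a_{i,k}$ with the sum taken over the edges incident to $i$ \emph{other than} $\{i,j\}$, the event $\{a_{i,\cdot}=n_i\}$ becomes the requirement that exactly $n_i-b$ of these residual edges be present; since each such edge is an independent Bernoulli$(\epsilon)$, the residual count is Binomial and contributes a factor of the form $\binom{n-1}{\,n_i-b\,}\epsilon^{\,n_i-b}(1-\epsilon)^{\,n-1-n_i+b}$, and symmetrically the same computation at the endpoint $j$ yields the matching factor in $n_j$.

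The structural heart of the argument is that the three edge sets involved --- the singleton $\{i,j\}$, the residual edges at $i$, and the residual edges at $j$ --- are pairwise disjoint, their only common member $\{i,j\}$ having already been isolated. By independence the joint probability is therefore the product of the three factors above, which is exactly the announced formula, reading the third prescribed quantity as the degree $a_{\cdot,j}=n_j$ of the endpoint $j$ as forced by the undirected symmetry $a_{\cdot,j}=a_{j,\cdot}$.

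The one genuinely delicate step, and where I would take the most care, is the bookkeeping of how many residual Bernoulli trials each endpoint retains once $\{i,j\}$ has been fixed, since this count is precisely what pins down the upper index $n-1$ of the binomial coefficients together with the matching exponents $n-1-n_i+b$ and $n-1-n_j+b$ of $(1-\epsilon)$. I would therefore state the edge-incidence convention for the degree sums explicitly before counting, and then sanity-check each factor by verifying that the exponents of $\epsilon$ and of $1-\epsilon$ sum to the number of trials, so that every factor is a legitimate Binomial probability mass and the whole product sums to one over the admissible ranges of $b$, $n_i$ and $n_j$.
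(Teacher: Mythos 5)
Your proposal is correct and coincides with the paper's (implicit) argument: Proposition~\ref{prop:probValues} is stated without a separate proof, and the announced product is exactly your factorization into the Bernoulli factor for the edge $\{i,j\}$ and two independent Binomial factors for the residual degree counts at $i$ and at $j$, justified by the pairwise disjointness of the three edge sets. The convention point you single out as delicate is indeed the only subtlety: the upper index $n-1$ of the binomial coefficients (like the range $n_i \le n$ in the statement and the bound $a_{i,\cdot}\le n$ used in equations~\ref{eq:extremebias}) presupposes that each node carries $n$ potential incident slots, i.e.\ that the diagonal term is counted in $a_{i,\cdot}$, whereas under a strictly loop-free reading of the paper's Gilbert definition (``any set of two nodes'') each endpoint would retain only $n-2$ residual trials after fixing $a_{i,j}$; so stating the edge-incidence convention explicitly before counting, as you propose to do, is exactly what is needed to make the formula as written come out.
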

The corresponding value $m^+_{i,j}$ and $m^{\times}_{i,j}$ associated to a group $(b,n_i,n_j)$ of the parameters being evident, we propose figure~\ref{fig:critDistTh} which represents the difference between theoretical distributions of both criteria with $\epsilon=0.3$. $b^{\times}$ and $b^{+}$ have distinct forms but their proximity on highly probable values, given on Figure~\ref{fig:critDistTh}, illustrates section~\ref{ssec:Ediff}: if we couple two variables with $n$ margins, expected difference is less than $\frac{1}{n^2}$. 

\begin{figure}[!h]
\begin{center}
\includegraphics[scale=0.5]{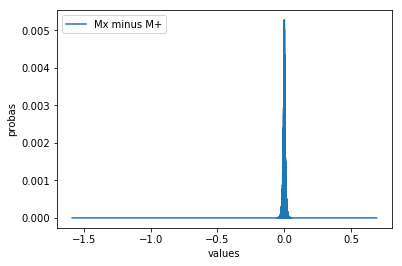}
\caption{Theoretical distribution of the difference $m^{\times}(G)_{i,j} - m^{+}(G)_{i,j}$ (same as $b^+_{i,j}-b^{\times}_{i,j})$ on generated graphs}
\label{fig:critDistTh}
\end{center}
\end{figure}

Extreme values, on the contrary may differ drastically. Though it seems the opposite to Figure~\ref{fig:critScatter} as $m^+$ comes with higher values than $m^\times$, it's consistent because of the \textit{minus} sign in the formula linking $m$ with $b$. Having noticed that $b^+$ and $b^{\times}$ differ on their extreme values, we compute them on a general Gilbert network (respecting the common value of $2M=n^2\epsilon$), and obtain the bounds:
\begin{eqnarray}
-\epsilon \le  b^+ \le \frac{n}{n} + \frac{n}{n} -\epsilon = 2-\epsilon  &\textnormal{ and }& 0 \le b^{\times} \le \frac{n\times n}{n^2\epsilon} = \frac{1}{\epsilon} \label{eq:extremebias}
\end{eqnarray}

As already expected with figure~\ref{fig:critDistTh} the difference between extreme values is arbitrarily high. Eventually it shows that in average, both canonical criteria will share a similar behavior hence shall be applied indifferently when the network's neighborhood weight $a$ is close to uniform. Precisely, in any real application the chosen criterion distinguishes at each iteration a link among the neighborhood unexpected when compared to the equilibrium (either independance or indetermination). The difference between the two equilibria is so small that if one distinguishes a link, the other will so that we expect the resulting classes to be comparable if not identical. Indeed, we know we can build, following equations~\ref{eq:extremebias}, networks on which $\mathcal{M}^+$ and $\mathcal{M}^{\times}$ deliver completely different results.

\section{Conclusions}
First, we followed the historical line and introduced two basis from Discrete Optimal Transport Theory: \textit{independence} and \textit{indetermination}. As recalled, the first one is the most intuitive and frequently used in mathematical articles as well as experimented in real life. The second notion appeared more surprising, poorly studied in the statistical literature but more commonly used by people working on Mathematical Relational Analysis Voting Theory and Analysis of Variance.
Together, they cover the only two canonic projection costs as quoted in \cite{csiszar1991least}.

To illustrate the usefulness of the parallel construction, we turned to applications and completed the track followed by Patricia Conde-C{\'e}spedes in her thesis~\cite{PCThese}. She gathered a list of networks clustering criteria and classified them according to their deviation to one of the mentioned coupling functions. Section~\ref{sec:Graphs} reports a further analyze of the two canonical criteria. It gathers results about the general similarity of their application on various networks as well as their extreme values to set one another apart. 

In each section, from optimal transport to networks, we insisted on the parallel between both notions together with their differences. As quoted beforehand, they appear as the two unique canonic structural solutions. Generally, the differences between them needs to be scanned up, either to coin a macro criteria, or to chose wisely between one or another depending on the structure of the network. In any case, the traditional use of \textit{independence} at the expense of \textit{indetermination} needs to be further investigated and explained.


%
%

\begin{acknowledgements}
We thank the editor and two anonymous referees for their valuable comments.
\end{acknowledgements}

%
%

\bibliographystyle{spmpsci}  
\bibliography{Biblio} 

\end{document}